\documentclass[11pt,draftcls,onecolumn]{IEEEtran} 
\pdfoutput=1
\usepackage{setspace}

\usepackage{cite}

\usepackage{amssymb}
\usepackage{amsmath}

\usepackage{amsthm}
\usepackage{graphicx}
\graphicspath{{./graphics/}{.}}
\usepackage{color}

\usepackage{multirow}

\newcommand{\Complex}{\mathbb{C}}
\newcommand{\hd}{\hdots}

\long\def\symbolfootnote[#1]#2{\begingroup
\def\thefootnote{\fnsymbol{footnote}}\footnote[#1]{#2}\endgroup}
\newcommand{\Expected}{\mathbb{E}}

\newcommand{\bsm}[1]{{\boldsymbol #1}}

\newtheorem*{thm}{Theorem 2}

\newtheorem*{thmx}{Theoem 1 {(Newey and McFadden\cite[Ch. 36, Theorem 2.1]{handbook_econometrics}})}

\newtheorem{lem}{Lemma}

\IEEEoverridecommandlockouts
\begin{document}

\title {Blind Channel Estimation for  Amplify-and-Forward Two-Way Relay Networks Employing $M$-PSK Modulation}
\author{Saeed Abdallah and Ioannis N. Psaromiligkos\thanks{The authors are with McGill University, Department of Electrical and Computer Engineering, 3480 University Street, Montreal, Quebec, H3A 2A7, Canada. Email: saeed.abdallah@mail.mcgill.ca; yannis@ece.mcgill.ca, phone: +1 (514) 398-2465, fax: +1 (514) 398-4470. This work was supported in part by the Natural Sciences and Engineering Research Council of Canada (NSERC) under grant 262017.}\thanks{This work was presented in part at the 2011 IEEE International Conference on Acoustics, Speech and Signal Processing (ICASSP), Prague, Czech Republic}}

\maketitle
\begin{abstract}
We consider the problem of channel estimation for amplify-and-forward (AF) two-way relay networks (TWRNs). Most works on this problem focus on pilot-based approaches which impose a significant training overhead that reduces the spectral efficiency of the system. To avoid such losses, this work proposes blind channel estimation algorithms for AF TWRNs that employ constant-modulus (CM) signaling.  Our main algorithm is based on the deterministic maximum likelihood (DML) approach. Assuming \textit{M}-PSK modulation, we show that the resulting estimator is consistent and approaches the true channel with high probability at high SNR for modulation orders higher than 2. For BPSK, however, the DML performs poorly and we propose an alternative algorithm that performs much better by taking into account the BPSK structure of the data symbols. For comparative purposes, we also investigate the Gaussian maximum-likelihood (GML) approach which treats the data symbols as Gaussian-distributed nuisance parameters. We derive the Cramer-Rao bound and use Monte-Carlo simulations to investigate the mean squared error (MSE) performance of the proposed algorithms. We also compare the symbol-error rate (SER) performance of the DML algorithm with that of the training-based least-squares (LS) algorithm and demonstrate that the DML offers a superior tradeoff between accuracy and spectral efficiency.

\begin{center}\textbf{\textit{Index Terms}:} Amplify and Forward, Channel Estimation, Constant Modulus, Cramer-Rao bound, Maximum-likelihood, Two-way Relays.\end{center}
\end{abstract}

\section{Introduction}

Amplify-and-forward (AF) two-way relay networks (TWRNs)~\cite{rankov07} have recently emerged as a spectrally efficient approach for bidirectional communication between two terminals. The majority of works on TWRNs assume the availability of channel knowledge at the terminals, which makes it important to develop efficient channel estimation algorithms. Although some recent works have considered noncoherent two-way relaying which avoids channel estimation by employing differential modulation schemes (c.f.~\cite{song2010differential,cui2009differential}), the resulting systems have lower performance than those employing coherent schemes. 

So far, researchers have taken a training-based approach to channel estimation for AF TWRNs, estimating the channels using pilot symbols known to both terminals. In~\cite{gao2009optimal}, the training-based maximum-likelihood (ML) estimator was developed for the classical single-antenna single-relay system assuming flat-fading channels. In~\cite{feifei_power}, the ML approach was used to acquire initial channel estimates at the relay that were employed to denoise the training signal. Power was then allocated at the relay to optimize channel estimation at the terminals. The case when the relay and the two terminals are equipped with multiple antennas was considered in~\cite{tensor_based}, where initial estimates of the multi-input multi-output (MIMO) channels were acquired at each terminal using the training-based least-squares (LS) algorithm, and then a tensor-based approach was employed to improve the estimation accuracy by exploiting the structure of the channel matrices. Channel estimation for TWRNs in frequency selective environments has also been considered in a number of works. LS estimation for OFDM systems was considered in~\cite{gao2009channel}. A nulling-based LS algorithm was proposed in~\cite{joint_CFO} to jointly estimate the channel and the carrier-frequency offset (CFO). Finally, channel estimation and the design of training symbols for TWRNs operating in time-varying environments has been considered in~\cite{time_varying}, where a complex-exponential basis expansion model (CE-BEM) was used to facilitate LS channel estimation.

The obvious shortcoming of the pilot-based approach is the training overhead which in turn reduces the spectral efficiency of TWRNs. Blind channel estimation would avoid this burden, which makes it a more spectrally-efficient approach. Little effort, however, has so far been dedicated to develop blind channel estimation algorithms for TWRNs. In~\cite{feifei-blind}, a blind channel estimator was proposed for OFDM-based TWRNs by using the second-order statistics of the received signal and applying linear precoding at both terminals, while in ~\cite{joint_detection} a semi-blind approach to joint channel estimation and detection for MIMO-OFDM TWRNs was developed using the expectation conditional maximization (ECM) algorithm.

In this work, we consider blind channel estimation for TWRNs that employ constant-modulus (CM) signaling. Because of its constant envelope, CM signaling permits the use of inexpensive and energy-efficient nonlinear amplifiers. In fact, CM signaling in the form of continuous-phase modulation (CPM) is employed in the well-known GSM cellular standard, while 8-PSK modulation is employed in the Enhanced Data Rates for GSM Evolution (EDGE)~\cite{edge1999}. Moreover, QPSK modulation is supported in the 3rd Generation Partnership Program (3GPP) Long Term Evolution (LTE) and LTE-Advanced wireless standards~\cite{LTE_CM}. CM signaling has also been employed in a number of works on amplify-and-forward relay networks~\cite{maw2008,chowdhery2009,zhang2010,dharmawansa2010}. In fact, a novel relaying protocol has been recently proposed in~\cite{yang2011} whereby CM signalling is employed both at the terminals and at the relay which only forwards the phase of the received signal to the destination. This protocol is referred to as phase-only forward relaying.

Assuming nonreciprocal flat-fading channels, we propose a deterministic ML (DML)-based algorithm that estimates the channel blindly by treating the data symbols as deterministic unknowns. While the proposed algorithm may be applied to any type of CM signaling, we analyze its asymptotic performance assuming that the terminals employ $M$-PSK modulation. Noting that consistency is not guaranteed for ML estimators when the data symbols are treated as deterministic unknowns~\cite{neyman1948,kay96}, we prove that our DML estimator is consistent when the channel parameters belong to compact sets. We also study the asymptotic behavior of the DML estimator at high SNR and prove that it approaches the true channel with high probability for modulation orders higher than 2. For $M=2$, however, the DML estimator performs poorly, and we propose an alternative estimator based on the constrained ML (CML) approach which provides much better performance by explicitly taking into account the BPSK structure of the data symbols. As a simple alternative to the DML approach, we also consider the Gaussian ML (GML) estimator which can be obtained by treating the data symbols as Gaussian-distributed nuisance parameters. When CM signaling is employed, the GML estimator takes the form of a sample average which is consistent and can be updated online but suffers from an error floor at high SNR. To evaluate the performance of our estimator, we also derive two Cramer-Rao bounds (CRBs) for our estimation problem. The first bound is obtained by treating the data symbols as deterministic unknown parameters and the second is the modified CRB~\cite{gini2000} which treats the data symbols as random nuisance parameters.

Monte Carlo simulations are used to investigate the performance of the proposed algorithms. For $M>2$, we show that the DML estimator outperforms the GML estimator at medium-to-high SNR and approaches the CRB at high SNR. For $M=2$ we show that the CML-inspired estimator outperforms the GML estimator except at very low SNR. We also investigate the tradeoffs of following the blind approach by comparing the symbol-error rate (SER) performance of the DML estimator with that of the training-based LS estimator. We show that the DML approach provides a better tradeoff between accuracy and spectral efficiency.

The remainder of the paper is organized as follows. In Section~\ref{system_model}, we present our system model. In Section~\ref{proposed_algorithms} we present the proposed algorithms. In Section~\ref{performance_analysis}, we analyze the asymptotic behavior of our estimators and derive the CRB. We show our simulation results and comparisons in Section~\ref{simulations}. Finally, our conclusions are presented in Section~\ref{conclusions}.

\vspace{-2ex}
\begin{figure}[htbp]
\centering

\includegraphics[width=3.5in]{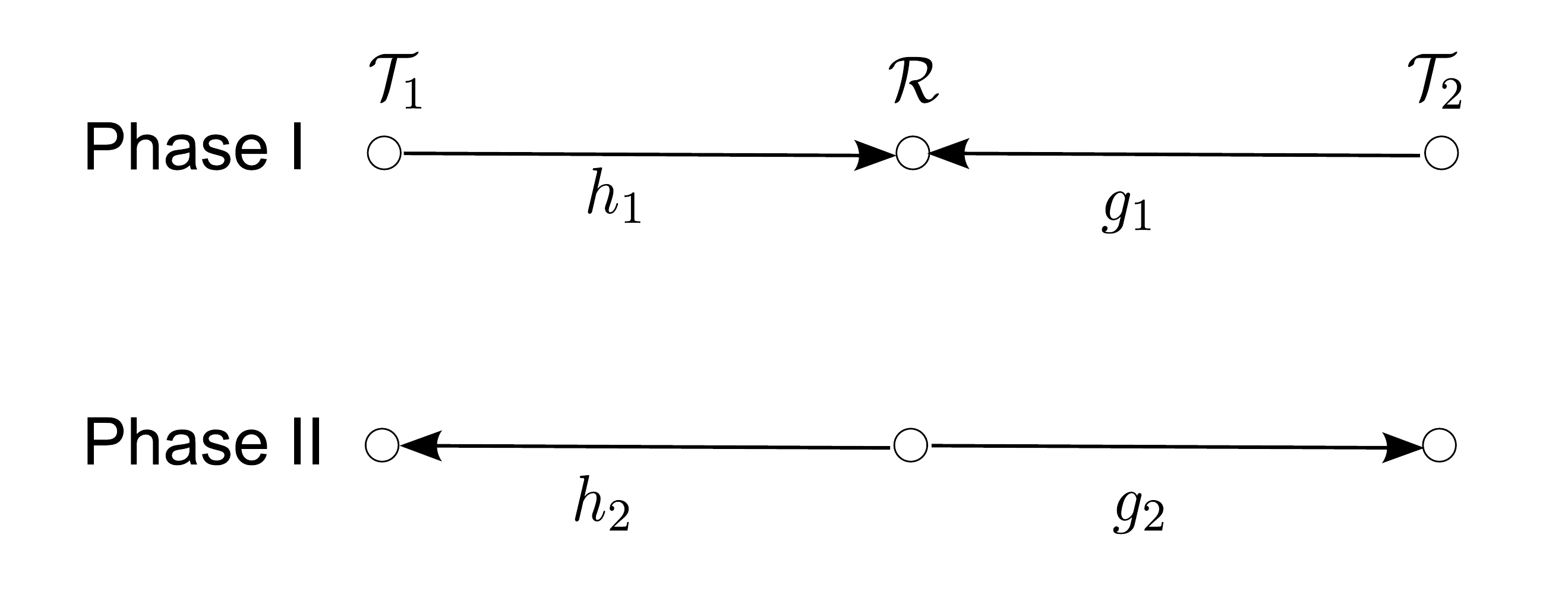}
\caption{The two-way relay network with two source terminals and one relay node.}
\label{two_way_nonreciprocal}
\end{figure}
\vspace{-5ex}
\section{System Model}
\label{system_model}
We consider the typical half-duplex TWRN with two source nodes, $\mathcal{T}_1$ and $\mathcal{T}_2$, and a single relaying node $\mathcal{R}$, shown in Fig.~\ref{two_way_nonreciprocal}. The network operates in quasi-static flat-fading channel conditions. Each data transmission period is divided into two phases. In the first phase, $\mathcal{T}_1$ and $\mathcal{T}_2$ simultaneously transmit to $\mathcal{R}$ the $M$-PSK data symbols $t_{1}$ and $t_{2}$, respectively. The symbols $t_1$ and $t_2$ are of the form $t_{1}=\sqrt{P_1}e^{\jmath\phi_{1}}$ and $t_{2}=\sqrt{P_2}e^{\jmath\phi_{2}}$, where $P_1$ and $P_2$ are the transmission powers of $\mathcal{T}_1$, $\mathcal{T}_2$, respectively, $\phi_{1}$ and $\phi_{2}$ are the information-bearing phases randomly and independently chosen from the set $S_M=\{(2\ell-1)\pi/M,\ell=1,\hdots,M\}$, and $\jmath\triangleq\sqrt{-1}$. The received signal at the relay during the first transmission phase is given by
\begin{equation} 
\label{received_signal}
r=h_1t_{1}+g_1t_{2}+n,
\end{equation}
where $h_1$ and $g_1$ are the complex coefficients of the flat-fading channels $\mathcal{T}_1\rightarrow\mathcal{R}$ and $\mathcal{T}_2\rightarrow\mathcal{R}$, respectively, and $n$ is zero-mean complex additive white Gaussian noise (AWGN) with variance $\sigma_r^2$. In the second phase, $\mathcal{R}$ broadcasts the amplified signal $Ar$, where $A$ is the amplification factor, assumed to be known at both terminals as is common practice (cf.~\cite{gao2009optimal,feifei_power}). The amplified signal passes through the channels $h_2$ and $g_2$ to reach $\mathcal{T}_1$ and $\mathcal{T}_2$, respectively. The complex channel coefficients $h_1$, $h_2$, $g_1$ and $g_2$ remain fixed during the estimation period. To maintain an average power of $P_r$ at the relay over the long term, the amplification factor is chosen as~\cite{gao2009optimal} 
\begin{equation}
A=\sqrt{\frac{P_r}{P_1+P_2+\sigma_r^2}}.
\end{equation}
Without loss of generality, we consider channel estimation at node $\mathcal{T}_1$. The received signal at $\mathcal{T}_1$ in the second transmission phase is
\begin{equation}
\label{atT1}
z=Ah_1h_2t_{1}+Ag_1h_2t_{2}+Ah_2n+\eta,
\end{equation}
where $\eta$ is the zero-mean AWGN term at $\mathcal{T}_1$ with variance $\sigma_t^2$. We assume that the noise variances $\sigma_r^2$ and $\sigma_t^2$ are unknown at $\mathcal{T}_1$.
The term $Ah_1h_2t_1$ represents the self-interference at $\mathcal{T}_1$. In~\eqref{atT1}, there are 6 unknown real channel parameters: the magnitudes and phases of the complex parameters $h_1$, $h_2$, and $g_1$. However, not all of them are identifiable. By inspecting~\eqref{atT1}, we see that the channel parameters which can be identified (blindly or otherwise) when the noise variances are unknown are\footnote{In this paper, $|\cdot|$ and $\angle(\cdot)$ denote the magnitude and phase of a complex number, respectively.} $|h_1||h_2|$, $|g_1||h_2|$, $\angle{h_1}+\angle{h_2}$, and $\angle{g_1}+\angle{h_2}$. On the other hand, it is not possible to identify the individual magnitudes $|h_1|, |h_2|$ and $|g_1|$, or individual phases $\angle{h_1}$, $\angle{h_2}$ and $\angle{g_1}$. In other words, the identifiable channel parameters are $a\triangleq h_1h_2$ and $b\triangleq g_1h_2$. 
Under the CM assumption, it is sufficient for detection purposes to know $a$ and $\phi_b\triangleq\angle{b}$.

\section{Proposed Channel Estimation Algorithms}
\label{proposed_algorithms}

Estimation is performed at $\mathcal{T}_1$ using $N$ received samples, $z_i$, $i=1,\hdots,N,$ of the form given by~\eqref{atT1}. The time index $i$, $i=1,\hdots,N,$ is used to indicate the realizations of $t_1$, $t_2$, $\phi_1$, $\phi_2$, $n$, $\eta$, that gave rise to each sample $z_i$. Let\footnote{In this paper, $(\cdot)^{T}$ and $(\cdot)^{*}$ are used to denote the transpose and complex conjugate operations, respectively.} $\boldsymbol{z}\triangleq[z_1,\hdots,z_N]^{T}$ be the vector of received samples. This vector can be expanded as
\begin{equation}
\label{received_vector}
\boldsymbol{z}=Aa\boldsymbol{t}_1+Ab\boldsymbol{t}_2+Ah_2{\boldsymbol n}+\boldsymbol{\eta},
\end{equation}
where ${\boldsymbol t}_1\triangleq[t_{11},\hdots,t_{1N}]^{T}$, ${\boldsymbol t}_2\triangleq[t_{21},\hdots,t_{2N}]^{T}$, ${\boldsymbol n}\triangleq[n_1,\hdots,n_N]^{T}$ and $\boldsymbol{\eta}\triangleq[\eta_1,\hdots,\eta_N]^{T}$.

\subsection{Deterministic ML Approach}
To avoid dealing with a complicated likelihood function, we treat the transmitted symbols $t_{2i}, i=1,\hdots,N$ as deterministic unknowns. We also ignore the finite alphabet constraint that restricts the phases $\phi_{2i}$, $i=1\hd,N$ to the set $S_{M}$. Nonetheless, the actual statistics of these phases will be used in Section~\ref{performance_analysis} to analyze the behavior of the resulting estimator. Due to the above assumptions, the DML approach can be used to blindly estimate the sums $\phi_{bi}\triangleq\phi_{2i}+\phi_b, i=1,\hdots,N,$ but it cannot be used to obtain separate estimates of $\phi_b$ and $\phi_{2i}$, $i=1\hd, N$. However, since $M$-PSK modulation is assumed, the Viterbi-Viterbi algorithm~\cite{viterbi83} may be used to blindly estimate $\phi_b$. A few pilots, however, are still needed to resolve the resulting $M$-fold ambiguity in $\phi_b$.

We define the vector of unknown parameters for the DML algorithm as $\boldsymbol{\theta}\triangleq[a, |b|, \phi_{b1},\hdots,\phi_{bN}, \sigma_o^2]^{T}$, where $\sigma_o^2\triangleq A^2|h_2|^2\sigma_r^2+\sigma_t^2$ is the overall noise variance at $\mathcal{T}_1$. With $\boldsymbol{t}_2$ assumed deterministic and $\boldsymbol{t}_1$ known, the received vector $\boldsymbol{z}$ is complex Gaussian with $\Expected\{\boldsymbol{z}\}=Aa\boldsymbol{t}_1+Ab\boldsymbol{t}_2$ and covariance $\Expected\{\boldsymbol{z}\boldsymbol{z}^{H}\}=\sigma_o^2\boldsymbol{I}$. Hence, the log-likelihood function is given by
\begin{equation} 
\label{log_likelihood}
\begin{split}
\mathcal{L}({\boldsymbol z};\boldsymbol{\theta})&=-\frac{1}{\sigma_o^2}\Vert {\boldsymbol z}-Aa{\boldsymbol t}_1-Ab{\boldsymbol t}_2\Vert^2-N\log\left(\pi\sigma_o^2\right)\\
&=-\frac{1}{\sigma_o^2}\sum\limits_{i=1}^{N}\left|z_{i}-Aat_{1i}-A|b|\sqrt{P_2}e^{\jmath\phi_{bi}}\right|^2-N\log\left(\pi\sigma_o^2\right).
\end{split}
\end{equation}
Let $\hat{a}$, $\widehat{|b|}$, $\widehat{\sigma_o^2}$, $\hat{\phi}_{bi}$ be the ML estimates of $a$, $|b|$, $\sigma_o^2$ and $\phi_{bi}$, $i=1,\hdots,N,$ respectively. It is straightforward to show that $\hat{\phi}_{bi}=\angle(z_{i}-A\hat{a}t_{1i})$, which when substituted into $\mathcal{L}({\boldsymbol z};\boldsymbol{\theta})$ yields the updated log-likelihood
\begin{equation}
\label{updated_likelihood}
\begin{split}
\mathcal{L}'({\boldsymbol z};a,|b|,\sigma_o^2)&=-\frac{1}{\sigma_o^2}\sum\limits_{i=1}^{N}\bigg(|z_{i}-Aat_{1i}|-A|b|\sqrt{P_2} \bigg)^2-N\log\left(\pi\sigma_o^2\right).
\end{split}
\end{equation}
Maximizing~\eqref{updated_likelihood} with respect to $|b|$, we get 
\begin{equation}
\label{b_estimator}
\widehat{|b|}=\frac{\sum\limits_{i=1}^{N}\left|z_i-A\hat{a}t_{1i}\right|}{NA\sqrt{P_2}}.
\end{equation}
Substituting $\widehat{|b|}$ in place of $|b|$ in~\eqref{updated_likelihood}, we obtain
\begin{equation}
\label{a_h2_ML}
\mathcal{L}''({\boldsymbol z};a,\sigma_o^2)=-\frac{1}{\sigma_o^2}\sum\limits_{i=1}^{N}\left(|z_i-Aat_{1i}|-\frac{1}{N}\sum\limits_{k=1}^{N}|z_k-Aat_{1k}|\right)^2-N\log\left(\pi\sigma_o^2\right).
\end{equation}
Hence, the ML estimate of $a$ is given by\footnote{We note that a slightly different estimator may be obtained by using a likelihood function that also takes into account the pilot symbols used to resolve the ambiguity in $\phi_b$. 
Since the required number of pilots is very small, the resulting estimator will only be slightly better than~\eqref{a_ML}.}
\begin{equation}
\label{a_ML}
\begin{split}
\hat{a}&=\arg\ \min_{u\in\Complex}\Bigg\{\sum\limits_{i=1}^{N}\left(|z_i-Aut_{1i}|-\frac{1}{N}\sum\limits_{k=1}^{N}|z_k-Aut_{1k}|\right)^2\Bigg\}.
\end{split}
\end{equation}
Finally, the ML estimate of the parameter $\sigma_o^2$ is neither needed for detection nor for estimating $a$, $|b|$, but is provided for completeness and is given by
\begin{equation}
\widehat{\sigma_o^2}=\frac{1}{N}\sum\limits_{i=1}^{N}\left(|z_i-A\hat{a}t_{1i}|-\frac{1}{N}\sum\limits_{k=1}^{N}|z_k-A\hat{a}t_{1k}|\right)^2.
\end{equation}
We note that the parameter $|b|$ is also not needed for detection when CM signaling is used. 

The ML estimator in~\eqref{a_ML} has an intuitive interpretation. Let $\tilde{z}_i(u)\triangleq z_i-Aut_{1i}, i=1,\hdots,N$ be the ``cleaned''  versions of the received signal samples after self-interference has been removed using the complex value $u$ as an estimate of $a$. The signals $\tilde{z}_i(u), i=1,\hdots,N$ are independently-generated realizations of the RV $\tilde{z}(u)\triangleq A(a-u)t_1+Abt_2+Ah_2n+\eta$. The quantity
\begin{equation}
\label{random_sequence0}
W_N(u)=\frac{1}{N-1}\sum\limits_{i=1}^{N}\left(|\tilde{z}_i(u)|-\frac{1}{N}\sum\limits_{k=1}^{N}|\tilde{z}_k(u)|\right)^2
\end{equation}
which is the objective function in~\eqref{a_ML}, scaled by $\frac{1}{N-1}$, also represents the sample-variance of the envelope of $\tilde{z}(u)$. We demonstrate in Section~\ref{performance_analysis} that the value $u=a$ which completely cancels the interference also minimizes the variance of the envelope of $\tilde{z}(u)$. Hence, the variance of the envelope of $\tilde{z}(u)$ may be seen as a measure of the level of self-interference. 

By treating the transmitted symbols $t_{2i}, i=1,\hd,N$ as deterministic unknowns, the estimator in~\eqref{a_ML} ignores the underlying structure of the phases $\phi_{2i}, i=1,\hd,N$. A valid question is whether a better blind estimator may be obtained by explicitly taking into account the structure of the phases. This question is particularly relevant for BPSK modulation, for which we show in Section~\ref{performance_analysis} that the DML estimator in~\eqref{a_ML} performs poorly as it experiences infinitely many global minima at high SNR. There are two alternative blind ML approaches which take into account the phase structure of the data. The first approach is to perform ML estimation using the likelihood function $f(z\ |\  a, b, \sigma_o^2)$ which can be obtained by treating $t_{2i}, i=1,\hd, N$ as random parameters and marginalizing over them. Unfortunately, this approach is very complicated due to the mixed-Gaussian nature of $f(z\ |\  a, b, \sigma_o^2)$. The other alternative is to follow a constrained ML (CML) approach by solving the likelihood function in~\eqref{log_likelihood} subject to the constraint that $\phi_{2i}\in S_M, i=1,\hd,N$. For higher order modulations, it is difficult to apply this approach and still obtain a closed-form objective function. However, as we shall see next, this approach becomes more feasible for BPSK modulation.

\subsection{Constrained ML Approach for BPSK}

As we will see shortly, a straightforward application of the CML approach for $M=2$ is not feasible as it results in a 3-dimensional search. However, it is possible to obtain a CML-inspired estimator which possesses a closed-form objective function by utilizing just a few pilot symbols to estimate the phase $\phi_b$. 

Under BPSK modulation, the CML approach minimizes the same log-likelihood function as in~\eqref{log_likelihood}, but subject to the constraint that $t_{2i}=\pm\sqrt{P_2},\ i=1,\hd, N$. We focus on the first term in~\eqref{log_likelihood} since the estimation of $\sigma_o^2$ is decoupled from the estimation of the remaining parameters. Let
\begin{equation}
\label{CML1}
\mathcal{J}(\bsm{z}; a, |b|, \phi_b, \bsm{t}_2)\triangleq \sum\limits_{i=1}^{N}\left|z_{i}-Aat_{1i}-A|b|\sqrt{P_2}e^{\jmath(\phi_{bi})}\right|^2
\end{equation}
be our objective function. The resulting estimate of $t_{2i}, i=1,\hd, N,$ is
\begin{equation}
\label{CML2}
\hat{t}_{2i}=\sqrt{P_2}\mbox{sgn}\left(\Re\{e^{-\jmath\phi_b}(z_i-Aat_{1i})\}\right).
\end{equation}
Substituting~\eqref{CML2} back into~\eqref{CML1}, we obtain
\begin{equation}
\label{CML3}
\mathcal{J}'(\bsm{z}; a, |b|, \phi_b)=\sum\limits_{i=1}^{N}|z_i-Aat_{1i}|^2+NA^2|b|^2P_2-2A|b|\sqrt{P_2}\sum\limits_{i=1}^{N}|\Re\{e^{-\jmath\phi_b}(z_i-Aat_{1i})\}|.
\end{equation}
From~\eqref{CML3}, we see that the CML estimate of $|b|$ is given by
\begin{equation}
\label{CML4}
\widehat{|b|}_c=\frac{1}{NA\sqrt{P_2}}\sum\limits_{i=1}^{N}|\Re\{e^{-\jmath\phi_b}(z_i-Aat_{1i})\}|.
\end{equation}
Substituting~\eqref{CML4} into~\eqref{CML3}, we obtain the updated objective function
\begin{equation}
\label{CML5}
\mathcal{J}''(\bsm{z}; a, \phi_b)=\sum\limits_{i=1}^{N}|z_i-Aat_{1i}|^2-\frac{1}{N}\left(\sum\limits_{i=1}^{N}|\Re\{e^{-\jmath\phi_b}(z_i-Aat_{1i})\}|\right)^2
\end{equation}
which we have to solve for $a$ and $\phi_b$. The CML estimate of $\phi_b$ in terms of $a$ is the solution of the following maximization problem
\begin{equation}
\label{CML6}
\hat{\phi}_{bc}=\arg\ \max_{\mathcal{\psi}\in[0, 2\pi)}\sum\limits_{i=1}^{N}|z_i-Aat_{1i}||\cos(\angle(z_i-Aat_{1i})-\psi)|.
\end{equation}
Unfortunately, the maximization in~\eqref{CML6} does not have a closed-form solution, which means that we cannot proceed to obtain an objective function that only depends on $a$, like the one in~\eqref{a_ML}. Thus, a strict application of the CML approach requires the use of a 3-dimensional search to estimate $a$ and $\phi_b$. 

To get around this problem, we propose to replace $\phi_b$ in~\eqref{CML5} with a pilot-based estimate of $\phi_b$ and then proceed to estimate $a$. We let $x_{1j}$, $x_{2j}$ and $s_j$, $j=1,\hd, J,$ be the $J$ pilot symbols transmitted by $\mathcal{T}_1$ and $\mathcal{T}_2$ and the corresponding samples received at $\mathcal{T}_1$, respectively. We may thus estimate $\phi_b$ by $\hat{\phi}_b=\angle\sum\limits_{j=1}^{J}(s_j-Aax_{1j})x_{2j}$. Substituting $\hat{\phi}_b$ into~\eqref{CML5}, we obtain the following estimate of $a$
\begin{equation}
\label{CML8}
\hat{a}_c=\arg\ \min_{u\in\Complex} \sum\limits_{i=1}^{N}|z_i-Aut_{1i}|^2-\frac{1}{N}\left(\sum\limits_{k=1}^{N}\left|\Re\left\{(z_k-Aut_{1k})e^{-\jmath\angle\sum\limits_{j=1}^{J}(s_j-Aux_{1j})x_{2j}}\right\}\right|\right)^2.
\end{equation}
We refer to the estimator in~\eqref{CML8} as the modified CML (MCML) estimator. 
\vspace{-2ex}
\subsection{Gaussian ML Approach}

In deriving the DML and the MCML estimators, we treated the data symbols $t_{2i}, i=1,\hd,N$ as deterministic unknowns. Another approach commonly used to deal with nuisance parameters is Gaussian ML estimation~\cite{Carvalho97}. In this case, the data symbols $t_{2i}, i=1,\hd, N$ are treated as i.i.d. complex Gaussian random variables with mean zero and variance $P_2$. Under this assumption, the total noise variance becomes $\sigma_g^2\triangleq P_2|b|^2+\sigma_o^2$. The resulting log-likelihood function on $a$ is
\begin{equation} 
\mathcal{L}(\bsm{z};a)=-\frac{1}{\sigma_g^2}\Vert {\boldsymbol z}-Aa{\boldsymbol t}_1\Vert^2-N\log(\pi\sigma_g^2).
\end{equation}
Hence, the GML estimate of a is
\begin{equation}
\label{GMLa0}
\hat{a}_g=\frac{\sum\limits_{i=1}^{N}t_{1i}^{*}z_i}{\sum\limits_{i=1}^{N}|t_{1i}|^2}=\frac{1}{NAP_1}\sum\limits_{i=1}^{N}t_{1i}^{*}z_i.
\end{equation}
Thus, the GML estimate of $a$ has the form of a computationally inexpensive sample-average. It can be easily updated at run-time by updating the average term as new samples arrive. The GML estimator will serve as a low-complexity benchmark with which to compare the performance of the proposed DML and MCML estimators.  

\section{Asymptotic Behavior Analysis}
\label{performance_analysis}

In this section, we analyze theoretically the asymptotic behavior of the proposed estimators. Since the parameters $|b|$ and $\sigma_o^2$ are not required for detection, our analysis focuses on the estimation of $a$. Regarding the DML estimator, the fact that we treat the phases $\phi_{bi}, i=1,\hd, N$ as deterministic unknowns means that the number of real unknown parameters for $N$ (complex) samples is $N+4$. Because the dimension of the parameter space grows linearly with the number of samples, the DML estimator falls within a special class of ML estimators that are based on ``partially-consistent observations''~\cite{neyman1948}. The estimated parameters can be classified into two groups. The first group are the incidental parameters $\phi_{bi}, i=1,\hd, N$. Each incidental parameter affects only a finite number of samples (a single sample in our case). The other group of parameters, $a$, $|b|$, and $\sigma_o^2$, affect all received samples and are thus called structural parameters. The estimation of structural parameters in the presence of incidental parameters has been investigated in the literature~\cite{neyman1948,mak1982,kay96,incidental2000} and is referred to as the Incidental Parameter Problem. It is well-known that the asymptotic properties of ML estimators, such as consistency, which hold when the dimension of the parameter space is fixed do not necessarily hold in the presence of incidental parameters~\cite{neyman1948}. It thus becomes important to investigate the asymptotic behavior of the DML estimator. We do this by explicitly taking into account the fact that $\phi_{1i}$ and $\phi_{2i}, i=1,\hd, N$ are equiprobably and independently chosen from the set $S_{M}$. Regarding the powers $P_1$, $P_2$  and $P_r$, the most common convention in TWRNs is to set $P_1=P_2=P_r$ (cf.~\cite{feifei_power,gao2009channel,cui2009memoryless}), even though some works use slightly different setups. For instance, the authors in~\cite{gao2009optimal} choose $P_2=2P_1$ and $P_r=\frac{1}{2}(P_1+P_2)$. In our analysis we consider the general case of $P_1=\alpha P_2$ and $P_r=\beta(P_1+P_2)$ for some $\alpha,\beta>0$.

The first asymptotic property we address is the consistency of the DML estimator, which we prove to hold when the parameter spaces of $a$ and $b$ are restricted to compact sets. This is true even for BPSK modulation. The second aspect we address is how the DML estimator behaves at high SNR (we will define SNR shortly). We show the estimator approaches the true channel with high probability at high SNR for $M>2$, while it encounters infinitely-many global minima at high SNR for $M=2$. 

We also analyze the high SNR behavior of the MCML estimator, showing that it approaches the true channel with high probability for $J=1$ (single pilot). For $J>1$, the pilots can be chosen such that the MCML estimator always approaches the true channel at high SNR.

\subsection{Consistency of the DML Estimator}
\label{sample_size_performance}
In this section, we study the behavior of the DML estimator as $N\rightarrow\infty$. We demonstrate that the DML estimator is consistent when the channel parameters $a$, $b$ belong to compact sets. This holds for any $M\geq2$ and for any $P_1, P_2, \sigma_r^2, \sigma_t^2$.

Before proceeding, we note that the estimator of $a$ in~\eqref{a_ML} belongs to the class of extremum estimators. An estimator $\hat{\omega}$ is called an extremum estimator (cf.~\cite{hayashi}, ~\cite{handbook_econometrics}) if there is an objective function $\Sigma_N(\omega)$ such that $\hat{\omega}=\arg\ \min\limits_{\omega\in\Omega}\Sigma_N(\omega)$, where $\Omega$ is the set of parameter values. The fundamental theorem for the consistency of extremum estimators can be summarized as follows:

\begin{thmx}
\label{extremum_theorem}
If $\omega$ belongs to a compact set $\Omega$ and $\Sigma_N(\omega)$ converges uniformly in probability to $\Sigma_o(\omega)$ as $N\rightarrow\infty$, where $\Sigma_o(\omega)$ is continuous and uniquely minimized at $\omega=\omega_o$, then $\hat{\omega}$ converges in probability to $\omega_o$. 
\end{thmx}

Thus, we need to establish that, as $N\rightarrow\infty$, the objective function $W_N(u)$ in~\eqref{random_sequence0} converges uniformly in probability to a deterministic function of $u$ which has a unique global minimum at $u=a$. Letting $v\triangleq a-u$, and $V_N(v)\triangleq W_N(a-v)$, we obtain
\begin{equation}
\label{random_sequence1}
V_N(v)=\frac{1}{N-1}\sum\limits_{i=1}^{N}\left(|y_i(v)|-\frac{1}{N}\sum\limits_{k=1}^{N}|y_k(v)|\right)^2,
\end{equation}
where $y_i(v)\triangleq \tilde{z}_i(a-v)$. The terms $y_i(v),\ i=1,\hd,N$ are independently-generated realizations of the RV $y(v)\triangleq \tilde{z}(a-v)$, and $V_N(v)$ is the sample variance of the envelope of $y(v)$. Let $\mathcal{V}(v)$ be the variance of $|y(v)|$, $\varphi_k\triangleq\frac{2\pi k}{M}$, and $\theta_k(v)\triangleq\phi_{v}-\phi_b+\varphi_k$ for $k=0,\hdots, M-1$. We show in Appendix~\ref{appendixD} that
\begin{equation}  
\label{var_exp}
\begin{split}
\mathcal{V}(v)&=A^2|v|^2P_1+A^2|b|^2P_2+\sigma_o^2-\left(\sqrt{\frac{\pi\sigma_o^2}{4M^2}}\sum_{k=0}^{M-1}L_{\text{\tiny$1/2$}}\left(-\lambda_{k}(v)\right)\right)^2,
\end{split}
\end{equation}
where $L_{\text{\tiny$1/2$}}(\cdot)$ is the Laguerre polynomial~\cite{abramowitz} with parameter $1/2$,
and 
\begin{equation}
\label{lambda_k_v}
\begin{split}
\lambda_{k}(v)&=\frac{1}{\sigma_o^2}\left(A^2|v|^2P_1+A^2|b|^2P_2+2A^2|v||b|\sqrt{P_1P_2}\cos(\theta_{k}(v))\right).
\end{split}
\end{equation}
The behavior of $\mathcal{V}(v)$ for $v=0$, which corresponds to $u=a$, is described in the following lemma:

\begin{lem}
\label{global_minimizer}
The variance $\mathcal{V}(v)$ of the random variable $|y(v)|$ has a unique global minimum occurring at $v=0$. 
\end{lem}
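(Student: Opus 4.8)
The plan is to show directly that $\mathcal{V}(v) > \mathcal{V}(0)$ for every $v \neq 0$, which is exactly the statement that $v=0$ is the \emph{unique} global minimizer. The first step is to simplify~\eqref{var_exp}. Since the angles $\theta_k(v)=\phi_v-\phi_b+\varphi_k$ in~\eqref{lambda_k_v} are equally spaced, $\sum_{k=0}^{M-1}\cos\theta_k(v)=0$ for every $M\geq 2$, so the cross term averages out: writing $\bar\lambda(v)\triangleq\frac{1}{M}\sum_{k}\lambda_k(v)$ we get $\sigma_o^2\bar\lambda(v)=A^2P_1|v|^2+A^2|b|^2P_2$. Introducing $f(\lambda)\triangleq L_{1/2}(-\lambda)$ and $\bar L(v)\triangleq\frac1M\sum_k f(\lambda_k(v))$, the expression~\eqref{var_exp} reads $\mathcal V(v)=\sigma_o^2\big(1+\bar\lambda(v)\big)-\frac{\pi\sigma_o^2}{4}\bar L(v)^2$, which has the form of a Rician-envelope variance except that the mean term uses the \emph{average} $\bar L$ of $f$ over the $M$ noncentralities rather than $f$ at the average. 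At $v=0$ all the $\lambda_k(0)$ collapse to the common value $\lambda^\star\triangleq A^2|b|^2P_2/\sigma_o^2$, and $\bar\lambda(v)=\lambda^\star+A^2P_1|v|^2/\sigma_o^2\geq\lambda^\star$ with equality iff $v=0$.

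The comparison then proceeds in two moves. First, I would use the concavity of $f$ on $[0,\infty)$: by Jensen's inequality applied to the uniform average over $k$, $\bar L(v)\leq f\!\left(\bar\lambda(v)\right)$, and since both sides are nonnegative this gives $\bar L(v)^2\leq f\!\left(\bar\lambda(v)\right)^2$ and hence $\mathcal V(v)\geq\Phi\!\left(\bar\lambda(v)\right)$, where $\Phi(\lambda)\triangleq\sigma_o^2(1+\lambda)-\frac{\pi\sigma_o^2}{4}f(\lambda)^2$ is the variance of a \emph{single} Rician envelope with noncentrality parameter $\lambda$. Crucially, this lower bound depends on $v$ only through $|v|$, so the argument is uniform in the phase $\phi_v$. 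Second, I would invoke the strict monotonicity of $\Phi$: because $\bar\lambda(v)\geq\lambda^\star$, monotonicity gives $\Phi\!\left(\bar\lambda(v)\right)\geq\Phi(\lambda^\star)=\mathcal V(0)$, with strict inequality whenever $|v|>0$. Chaining the two bounds yields $\mathcal V(v)\geq\Phi\!\left(\bar\lambda(v)\right)>\mathcal V(0)$ for all $v\neq0$, which is the claim; and at $v=0$ the $\lambda_k$ are equal, so Jensen is an equality and the bound is tight, confirming consistency of the chain.

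The technical heart---and the step I expect to be the main obstacle---is verifying the two analytic properties of $f$ on which everything rests: (i) $f$ is concave on $[0,\infty)$, and (ii) $\Phi$ is \emph{strictly} increasing, equivalently $f(\lambda)f'(\lambda)<2/\pi$ for all finite $\lambda\geq0$. Both are most naturally attacked through the closed form $f(\lambda)=e^{-\lambda/2}\big[(1+\lambda)I_0(\lambda/2)+\lambda I_1(\lambda/2)\big]$ in terms of modified Bessel functions, together with the standard recurrences relating $I_0$, $I_1$ and their derivatives, which reduce (i) and (ii) to sign conditions on combinations of $I_0$ and $I_1$. For (ii) the delicate point is that $f(\lambda)f'(\lambda)\to 2/\pi$ as $\lambda\to\infty$, so the required inequality is \emph{asymptotically tight} and one must rule out any overshoot; showing instead that $f(\lambda)^2$ is convex with limiting slope $4/\pi$ is one convenient route. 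Finally, I would record the elementary facts that keep the reduction valid for every modulation order: each $\lambda_k(v)\geq\frac{1}{\sigma_o^2}\big(A|v|\sqrt{P_1}-A|b|\sqrt{P_2}\big)^2\geq0$, so concavity is only needed on $[0,\infty)$, and the identity $\sum_k\cos\theta_k(v)=0$ holds for all $M\geq2$, so the argument covers BPSK as well.
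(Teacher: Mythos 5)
Your proposal is correct and follows essentially the same route as the paper's Appendix E: Jensen's inequality via the strict concavity of $L_{1/2}(-x)$ together with $\sum_k\cos\theta_k(v)=0$ reduces the problem to the single-Rician variance $\Phi(\bar\lambda(v))$ (the paper's $F(\nu)$), and strict monotonicity of $\Phi$ finishes the argument. The two analytic facts you defer are exactly what the paper verifies, and your suggested route for the delicate one --- convexity of $f^2$ with limiting slope $4/\pi$ --- is precisely the paper's argument that $Q(\rho)=\frac{\pi}{2}L_{1/2}(-\rho)\mathcal{P}(\rho)$ increases to $1$.
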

\vspace{-2ex}
\begin{proof}
See Appendix~\ref{appendixE}.
\end{proof}

To apply Theorem 1, it remains to establish that $V_N(v)$ converges uniformly in probability to $\mathcal{V}(v)$. Since $V_N(v)$ is the sample variance of $|y(v)|$ and $|y(v)|$ has a finite fourth central moment, $V_N(v)$ converges in probability to $\mathcal{V}(v)$~\cite{gracia}. The following lemma holds regarding uniform convergence in probability which is a stricter requirement than convergence in probability:
\begin{lem}
\label{uniform_convergence}
Suppose $a$, $b$ and $v$ all belong to compact sets, then $V_N(v)$ converges uniformly in probability to $\mathcal{V}(v)$. 
\end{lem}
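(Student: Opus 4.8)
The plan is to recognize $V_N(v)$ as a smooth function of two empirical means and then apply a uniform law of large numbers (ULLN) to each mean separately. First I would rewrite the sample variance in its algebraically equivalent form
\begin{equation}
V_N(v)=\frac{N}{N-1}\left(m_N^{(2)}(v)-\big(m_N^{(1)}(v)\big)^2\right),\quad m_N^{(1)}(v)\triangleq\frac{1}{N}\sum_{i=1}^{N}|y_i(v)|,\quad m_N^{(2)}(v)\triangleq\frac{1}{N}\sum_{i=1}^{N}|y_i(v)|^2,
\end{equation}
so that the target limit becomes $\mathcal{V}(v)=\mu_2(v)-\mu_1(v)^2$ with $\mu_1(v)\triangleq\Expected\{|y(v)|\}$ and $\mu_2(v)\triangleq\Expected\{|y(v)|^2\}$. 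Since $y_i(v)=Avt_{1i}+Abt_{2i}+Ah_2n_i+\eta_i$ is, for each $i$, an affine (hence continuous) function of $v$ formed from the independent and identically distributed draws $(t_{1i},t_{2i},n_i,\eta_i)$, the summands $|y_i(v)|$ and $|y_i(v)|^2$ are i.i.d.\ random functions that are continuous in $v$ and measurable in the data.

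The core step is to establish the ULLN for $m_N^{(1)}$ and $m_N^{(2)}$ via the standard hypotheses (cf.\ the uniform law in~\cite{handbook_econometrics}): compactness of the parameter set, continuity of the summand, and a dominance condition. Compactness holds by assumption and continuity is immediate, so the only nontrivial requirement is domination. Here I would use $|t_{1i}|=\sqrt{P_1}$, $|t_{2i}|=\sqrt{P_2}$, the boundedness of $|v|$ and $|b|$ on their compact sets, and the triangle inequality to bound
\begin{equation}
\sup_{v}|y_i(v)|\le AC_v\sqrt{P_1}+A|b|\sqrt{P_2}+A|h_2|\,|n_i|+|\eta_i|\triangleq d_i,
\end{equation}
where $C_v$ denotes the largest value of $|v|$ on the compact set. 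Then $d_i$ and $d_i^2$ dominate $|y_i(v)|$ and $|y_i(v)|^2$ uniformly in $v$, and the ULLN applies once $d_i$ and $d_i^2$ are shown to be integrable. This yields $m_N^{(1)}\to\mu_1$ and $m_N^{(2)}\to\mu_2$ uniformly in probability and simultaneously delivers the continuity, hence boundedness, of $\mu_1$ and $\mu_2$ on the compact set.

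Finally I would assemble the pieces. Because $\mu_1$ is continuous and bounded and $m_N^{(1)}$ is uniformly close to it, $(m_N^{(1)})^2$ converges uniformly in probability to $\mu_1^2$; subtracting this from the uniform convergence of $m_N^{(2)}$ and multiplying by the deterministic factor $N/(N-1)\to1$ gives uniform convergence of $V_N(v)$ to $\mu_2(v)-\mu_1(v)^2=\mathcal{V}(v)$. I expect the main obstacle to be the dominance bookkeeping for $m_N^{(2)}$, whose summand $|y_i(v)|^2$ requires the square-integrability of the envelope, $\Expected\{d_i^2\}<\infty$. This follows because the Gaussian noise terms $n_i,\eta_i$ possess finite moments of all orders, which simultaneously guarantees the integrability of $d_i$ and $d_i^2$ needed for the ULLN and the finite fourth central moment of $|y(v)|$ invoked for the pointwise convergence noted just before the lemma. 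Once these integrability facts are confirmed, the remainder is a routine application of the ULLN together with the continuity of $x\mapsto x^2$ on bounded sets.
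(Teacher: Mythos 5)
Your proof is correct, but it takes a genuinely different route from the paper's. You decompose the sample variance as $\tfrac{N}{N-1}\bigl(m_N^{(2)}-(m_N^{(1)})^2\bigr)$ and apply the dominance-based uniform law of large numbers (compactness, continuity of the summand in $v$, and an integrable envelope $d_i$) to the first and second empirical moments separately, then combine via continuity of the square on bounded sets. The paper instead works with $V_N(v)$ as a whole and invokes the stochastic-Lipschitz sufficient condition of Newey and McFadden (their Lemma 2.9): it factors $\bigl|V_N(v_1)-V_N(v_2)\bigr|$ as a difference of squares, bounds one factor by $2AP_1|v_1-v_2|$ and the other by a data-dependent quantity, and then shows the resulting Lipschitz coefficient $\mathcal{F}_N(\bsm{z},\bsm{t}_1)$ has bounded expectation using the noncentral-chi mean formula from Appendix A; combined with the pointwise convergence already noted in the main text, this yields uniform convergence. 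Your route is more modular and arguably lighter on bookkeeping: the envelope bound is a one-line triangle inequality, integrability follows from Gaussian moments without needing the Laguerre-polynomial expression for $\Expected\{|z_i|\}$, and the ULLN delivers continuity of $\mu_1$ and $\mu_2$ (hence of $\mathcal{V}$) as a byproduct, which is separately needed for the consistency theorem. The paper's route avoids the moment decomposition entirely and only requires controlling increments of $V_N$, at the cost of the somewhat heavier algebra in~\eqref{appF1}--\eqref{appF5}. The only point you should make explicit is the verification that $\sup_v\bigl|m_N^{(1)}\bigr|$ is bounded in probability (immediate from the envelope $\tfrac{1}{N}\sum_i d_i$), which is what licenses passing uniform convergence of $m_N^{(1)}$ through the squaring map; as written you assert this via boundedness of $\mu_1$, which suffices once combined with the uniform closeness of $m_N^{(1)}$ to $\mu_1$.
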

\vspace{-2ex}
\begin{proof}
See Appendix~\ref{appendixF}.
\end{proof}
From Lemma~\ref{global_minimizer} and Lemma~\ref{uniform_convergence}, we see that all of the conditions of Theorem 1 are met if the channel parameters $a$ and $b$ belong to compact sets. Hence, the following theorem holds. 
\begin{thm}
\label{consistency_ML}
If the channel parameters $a$, $b$ belong to the compact sets $B_1$ and $B_2$, then the following estimator of $a$
\begin{equation*}
\label{a_ML_consistent}
\hat{a}=\arg\ \min_{u\in B_1}\Bigg\{\sum\limits_{i=1}^{N}\left(|z_i-Aut_{1i}|-\frac{1}{N}\sum\limits_{k=1}^{N}|z_k-Aut_{1k}|\right)^2\Bigg\}.
\end{equation*}
 is consistent.
\end{thm}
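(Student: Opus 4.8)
The plan is to recognize that $\hat{a}$ is an extremum estimator and then simply verify the four hypotheses of Theorem~1, with the two preceding lemmas supplying the nontrivial ingredients. First I would make the identification explicit: the objective $\sum_{i=1}^{N}(\cdots)^2$ appearing in the theorem statement has exactly the same minimizer as $W_N(u)$ in~\eqref{random_sequence0}, since the two differ only by the positive constant factor $\frac{1}{N-1}$. Hence $\hat{a}$ minimizes $W_N(u)$ over $u\in B_1$, and Theorem~1 applies with the correspondences $\omega\leftrightarrow u$, $\Omega\leftrightarrow B_1$, and $\Sigma_N\leftrightarrow W_N$.

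Next I would pass to the shifted variable $v\triangleq a-u$ and $V_N(v)\triangleq W_N(a-v)$ as in~\eqref{random_sequence1}, so that minimizing $W_N$ over $u\in B_1$ is equivalent to minimizing $V_N$ over $v\in a-B_1$. Because $a$ is a fixed element of $B_1$ and $u$ ranges over the compact set $B_1$, the shifted domain $a-B_1$ is again compact, so the compactness hypothesis is met. The translated minimizer satisfies $\hat{v}=a-\hat{a}$; therefore establishing $\hat{v}\to 0$ in probability is exactly equivalent to establishing $\hat{a}\to a$ in probability, which is the claimed consistency.

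It then remains to check the three analytic conditions on the limiting objective $\mathcal{V}(v)$. Uniform convergence in probability of $V_N(v)$ to $\mathcal{V}(v)$ on this compact domain is furnished directly by Lemma~\ref{uniform_convergence}, whose hypotheses hold here since $a\in B_1$, $b\in B_2$, and $v\in a-B_1$ are all confined to compact sets. The unique-global-minimizer requirement, namely that $\mathcal{V}(v)$ attains its minimum only at $v=0$ (equivalently $u=a$), is precisely Lemma~\ref{global_minimizer}. The one condition not packaged into a lemma is continuity of $\mathcal{V}$, which I would read off the closed form~\eqref{var_exp}: the Laguerre terms depend on $v$ only through $\lambda_k(v)$, and although $\phi_v=\angle v$ is itself discontinuous at the origin, $\lambda_k(v)$ enters only via $|v|^2$ and the products $|v|\cos\theta_k(v)=\Re\{v\,e^{\jmath(\varphi_k-\phi_b)}\}$, each of which is a continuous (indeed real-analytic) function of $v$ through the origin; hence $\mathcal{V}(v)$ is continuous everywhere. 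With compactness, uniform convergence, continuity, and the unique minimizer all in hand, Theorem~1 yields $\hat{v}\to 0$ in probability, equivalently $\hat{a}\to a$ in probability.

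The genuine difficulty lives in the two lemmas rather than in this assembly, so I expect the proof of the theorem itself to be short. The main obstacle is Lemma~\ref{global_minimizer}: ruling out spurious minimizers of $\mathcal{V}(v)$ away from $v=0$ requires exploiting the equiprobable, independent $M$-PSK phase statistics through the Laguerre-polynomial form of~\eqref{var_exp}, and arguing that the subtracted squared term is strictly maximized only when the self-interference is fully cancelled. A secondary point to handle carefully is simply that the theorem restricts the search to $u\in B_1$ (so that the compactness needed by Theorem~1 is genuinely available), rather than to all of $\Complex$ as in the estimator~\eqref{a_ML}.
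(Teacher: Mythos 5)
Your proposal is correct and follows essentially the same route as the paper: the paper likewise identifies $\hat{a}$ as an extremum estimator and deduces the theorem directly from Theorem~1 together with Lemma~\ref{global_minimizer} (unique global minimizer of $\mathcal{V}(v)$ at $v=0$) and Lemma~\ref{uniform_convergence} (uniform convergence in probability on compact sets). Your additional care about the continuity of $\mathcal{V}(v)$ through the origin and the compactness of the shifted domain $a-B_1$ is a welcome tightening of details the paper leaves implicit.
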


The compactness assumption for $a$ and $b$ can be satisfied by assuming that the magnitudes of $g_1$, $h_1$ and $h_2$ are bounded. If we treat $g_1$, $h_1$ and $h_2$ as complex Gaussian random variables, there is no upper bound on $|a|$ and $|b|$, strictly speaking, but we can always choose a sufficiently large $\xi$ such that $Prob(|a|, |b|\leq \xi)=1-\epsilon$, where $\epsilon$ can be made arbitrarily small.

\subsection{High SNR Behavior of the DML Estimator}
\label{SNRbehaviorDML}
We now investigate the behavior of the DML estimator at high transmit SNR. To simplify our presentation, we assume that $\sigma_r^2=\sigma_t^2=\sigma^2$. This assumption allows for a simple and intuitive definition of SNR as $\gamma\triangleq\frac{P_2}{\sigma^2}$. Let 
\begin{equation}
\label{defXv}
X(v)=\sum\limits_{i=1}^{N}\left(|Avt_{1i}+Abt_{2i}|-\frac{1}{N}\sum\limits_{k=1}^{N}|Avt_{1k}+Abt_{2k}|\right)^2
\end{equation}
be the objective funcion in~\eqref{a_ML} in the limit as $\sigma\rightarrow0$. The following lemma describes the behavior of the DML estimator as $\sigma\rightarrow0$.
\begin{lem}
\label{lemma_highSNR}
For fixed (finite) $N,$ the DML estimator approaches the true channel $a$ as $\sigma\rightarrow0$, except when the data symbols are such that the phase differences $\phi_{1i}-\phi_{2i},\ i=1,\hd, N$ take at most two distinct values, in which case the objective function encounters infinitely many global minima as $\sigma\rightarrow0$. Therefore, assuming $M$-PSK transmission, $\hat{a}\rightarrow a$ as $\sigma\rightarrow0$ with probability $P{\text{\tiny$M,N$}}=1-\left(\frac{2}{M}\right)^{N-1}(M-1)$. 
\end{lem}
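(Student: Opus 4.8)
The plan is to study $X(v)$, the $\sigma\to0$ limit of the objective in~\eqref{a_ML} written in the variable $v\triangleq a-u$, by first noting that $X(v)=\sum_{i}\big(|Avt_{1i}+Abt_{2i}|-\tfrac1N\sum_k|Avt_{1k}+Abt_{2k}|\big)^2$ is exactly $(N-1)$ times the sample variance of the envelope sequence $\{|Avt_{1i}+Abt_{2i}|\}_{i=1}^{N}$. Hence $X(v)\ge0$, with $X(v)=0$ if and only if that sequence is constant in $i$. I would then expand $|Avt_{1i}+Abt_{2i}|^2=A^2\big(|v|^2P_1+|b|^2P_2+2|v||b|\sqrt{P_1P_2}\cos(\phi_v-\phi_b+\delta_i)\big)$, where $\phi_v\triangleq\angle v$ and $\delta_i\triangleq\phi_{1i}-\phi_{2i}$. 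For $v=0$ this equals $A^2|b|^2P_2$ for every $i$, so $X(0)=0$ and $u=a$ is always a global minimizer; the whole question is whether it is the \emph{unique} one.

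Next I would characterize the nonzero minimizers. For $v\ne0$ (and $b\ne0$) the envelope is constant in $i$ iff $\cos(c+\delta_i)$ is independent of $i$, where $c\triangleq\phi_v-\phi_b$. Using $\cos\alpha=\cos\beta\iff\alpha\equiv\pm\beta\pmod{2\pi}$, I would split on the number $r$ of distinct values among $\{\delta_i\}$: if $r=1$ every $c$ works, so $X(v)=0$ for all $v$ (a whole plane of minima); if $r=2$ with values $\delta',\delta''$ the requirement forces, via the reflection branch (the equality branch is excluded by distinctness), $c\equiv-(\delta'+\delta'')/2\pmod\pi$, which is solvable and leaves $|v|$ free, so $X$ vanishes on an entire line through the origin; and if $r\ge3$ the conditions from two different pairs, $c\equiv-(\delta'+\delta'')/2$ and $c\equiv-(\delta'+\delta''')/2\pmod\pi$, are incompatible since they force $\delta''\equiv\delta'''\pmod{2\pi}$, so $v=0$ is the unique global minimum. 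This yields the dichotomy: the phase differences take at most two distinct values $\Longleftrightarrow$ infinitely many global minima, and otherwise the minimizer is unique. To pass from ``unique minimizer of $X$'' to ``$\hat a\to a$ as $\sigma\to0$'', I would note that for fixed $N$ the objective in~\eqref{random_sequence0} converges (in probability, uniformly on compact sets) to $\tfrac1{N-1}X(a-u)$ as the noise variance vanishes, and apply a standard argmin-continuity argument.

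It then remains to compute the probability of the degenerate event. Since $\phi_{1i},\phi_{2i}$ are equiprobable and independent over $S_M$, the reduced differences $\delta_i\bmod2\pi$ are i.i.d.\ uniform over the $M$ values $\{2\pi k/M:k=0,\dots,M-1\}$, because a difference of two independent uniforms on $\mathbb{Z}_M$ is uniform. The degenerate event is that all of $\delta_1,\dots,\delta_N$ fall in some two-element subset of these $M$ values. Bounding by a union over the $\binom{M}{2}$ such subsets, each occupied by all $N$ differences with probability $(2/M)^N$, gives $\mathrm{Prob}(\text{degenerate})\le\binom{M}{2}(2/M)^{N}=(M-1)(2/M)^{N-1}$, hence $\hat a\to a$ with probability $P_{\text{\tiny$M,N$}}=1-(M-1)(2/M)^{N-1}$.

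The main obstacle is the $r\ge3$ step: one must verify carefully, using the two-branch structure of $\cos\alpha=\cos\beta$, that no single common phase $c$ can equalize three or more distinct cosine arguments, so that introducing a third distinct difference genuinely destroys the spurious line of minima. The second point requiring care is the counting: the $\binom{M}{2}$ pair-events overlap precisely on the constant ($r=1$) configurations, so the union bound counts those more than once; this is why the reported $P_{\text{\tiny$M,N$}}$ is the union-bound value and is only tight to leading order, which is nonetheless sufficient since $P_{\text{\tiny$M,N$}}\to1$ as $N\to\infty$.
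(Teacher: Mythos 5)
Your argument is essentially the paper's own: reduce $X(v)=0$ to constancy of $\cos(\phi_v-\phi_b+\phi_{1i}-\phi_{2i})$ over $i$, use the two-branch structure of $\cos\alpha=\cos\beta$ to show this forces the differences $\phi_{1i}-\phi_{2i}$ into a two-element set (yielding a whole line of spurious minimizers exactly when that event occurs), and then count. Your one substantive addition is correct and worth recording: the paper asserts $P(\mathcal{E})=\binom{M}{2}2^N/M^N$ as an equality, but as you observe this over-counts the all-equal configurations (by $M(M-2)/M^N$), so for $M>2$ the stated $P_{M,N}$ is strictly a lower bound on the success probability (exact only for $M=2$), which does not affect the lemma's qualitative conclusion.
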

\begin{proof}
See Appendix~\ref{appendixA}. 
\end{proof}

As we can see from Lemma~\ref{lemma_highSNR}, the DML estimator approaches the true channel except in the event that the phase differences $\phi_{1i}-\phi_{2i},\ i=1,\hd, N$ happen to take at most two distinct values. When this event occurs, the objective function of the estimator will have infinitely many global minima at high SNR. In fact, the occurence of this event also results in a singular Fisher information matrix as we shall see in Section~\ref{CRB_analysis}. For $M>2$, this event is highly unlikely for sufficiently large $N$. For instance, at $M=4$, the probability of this event is less than $5.6\times10^{-9}$ for $N\geq30$. Thus, the DML estimator approaches the true channel with high probability for $M>2$ as long as the sample size is not extremely small. In this case, the average MSE performance of the estimator keeps improving with SNR, and it can effectively achieve arbitrary accuracy for sufficiently high SNR. For $M=2$, however, $P{\text{\tiny$2,N$}}=0$ and the estimator always encounters infinitely-many global minima at high SNR because the difference $\phi_{1i}-\phi_{2i}$ cannot take more than two distinct values for $i=1,\hd, N$. In fact, for $M=2$ and for any non-zero $v, b,$ there are only two possible values that the terms $|vt_{1i}+bt_{2i}|, i=1,\hd,N$ can take, and they are $|\sqrt{P_1}v+\sqrt{P_2}b|$ and $|\sqrt{P_1}v-\sqrt{P_2}b|$. Whenever\footnote{The notation $x\perp y$ means that $\angle x-\angle y=\frac{\pi}{2}\ \mbox{mod}\ \pi$.} $v\perp b$, these two values are equal, which means that the terms $|vt_{1i}+bt_{2i}|$, $i=1,\hd,N$ are all equal regardless of the values of $t_{1i}$, $t_{2i}$, $i=1,\hd,N$. Therefore, all the values of $v$ such that $v\perp b$ are global minimizers of the objective function $X(v)$, and the estimator is not able to identify the true channel. Hence, the DML estimator performs poorly for $M=2$ as its MSE performance degrades at high SNR. Despite this behavior, the DML estimator is consistent for $M=2$ for any fixed $\sigma>0$, because for $\sigma>0$ only $v=0$ minimizes the variance $\mathcal{V}(v)$. This explains why, as we shall see, the DML estimator performs better at low SNR than high SNR for $M=2$. Since the performance of the estimator will also degrade for very low SNR, the MSE for $M=2$ exhibits a U-shaped behavior when plotted versus SNR. This is confirmed by our simulation results in Section~\ref{simulations}. 

We note that similar high SNR behavior is encountered if the noise variances $\sigma_r^2$ and $\sigma_t^2$ are assumed to be different (but fixed) and the transmit power $P_2$ becomes asymptotically high.

\subsection{High SNR Behavior of the MCML Estimator (M=2)}
\label{SNRbehaviorMCML}
We now investigate the high SNR behavior of the MCML estimator and show that by taking into account the BPSK structure of the data symbols it effectively avoids the problem of infinitely many global minima at high SNR. Let 
\begin{equation}
Y(v)\triangleq \sum\limits_{i=1}^{N}|Avt_{1i}+Abt_{2i}|^2-\frac{1}{N}\left(\sum\limits_{k=1}^{N}|Avt_{1k}+Abt_{2k}|\cdot |\cos(\Delta_k)|\right)^2
\end{equation}
be the objective function in~\eqref{CML8} as $\sigma\rightarrow0$, expressed in terms of $v=a-u$, where
\begin{equation}
\Delta_k=\angle(Avt_{1k}+Abt_{2k})-\angle(JAbP_2+Av\sum_{j=1}^{J}x_{1j}x_{2j}).
\end{equation}
Clearly, $Y(v)\geq0$ and $Y(0)=0$. It remains to investigate whether it is possible to have $Y(v)=0$ for $v\neq0$, i.e., whether there could be other global minima besides $v=0$. For this to happen, the terms $|Avt_{1k}+Abt_{2k}|,\ k=1,\hd, N$ have to be all equal and the terms $|\cos(\Delta_k)|$, $k=1,\hd, N$ have to be all equal to $1$. The first requirement can be met if $v\perp b$ or if the product $t_{1i}t_{2i}$ is the same for $i=1,\hd, N$. As for the second requirement, it is only satisfied when
\begin{equation}
\label{SNR_MCML}
\angle(Avt_{1k}+Abt_{2k})=\angle(JAbP_2+Av\sum_{j=1}^{J}x_{1j}x_{2j})\ \mbox{mod}\ \pi
\end{equation}
for $k=1, \hd, N$, where $\mbox{mod}$ is the modulo operator. Let $\mathcal{\mathcal{H}}$ be the event that the requirement in~\eqref{SNR_MCML} is satisified for $k=1,\hd, N$. For $J=1$, $\mathcal{\mathcal{H}}$ occurs with probability $P(\mathcal{\mathcal{H}})=1/2^{N}$. For $J>1$, however, the occurrence of $\mathcal{\mathcal{H}}$ can be completely avoided if the pilots are chosen such that the products $x_{1j}x_{2j},\ j=1,\hd, J$ are not all equal. If the pilots are randomly chosen, then $P(\mathcal{\mathcal{H}})=1/2^{N+J-1}$. Hence, depending on the choice of the pilots, the MCML estimator either always approaches the true channel or approaches it with a probability of $1-1/2^{N+J-1}$. Thus, the MCML estimator effectively resolves the problem of infinitely many global minima at high SNR even when a single pilot is used.

\subsection{MSE Performance of the GML Estimator}
\label{sample_average}

The estimator of $a$ in~\eqref{GMLa0} can be expanded as
\begin{equation}
\label{expanded_simple}
\begin{split}
\hat{a}_g&=a+\frac{b}{N\sqrt{\alpha}}\sum_{i=1}^{N}e^{\jmath(\phi_{2i}-\phi_{1i})}+\frac{1}{N\sqrt{\alpha P_2}}\sum_{i=1}^{N}e^{-j\phi_{1i}}h_2n_i+\frac{1}{NA\sqrt{\alpha P_2}}\sum_{i=1}^{N}e^{-j\phi_{1i}}\eta_{i}.
\end{split}
\end{equation}
It is straightforward to check that the estimator is unbiased. The resulting MSE is 
\begin{equation}
\label{MSE_MPSK}
\Expected\{|\hat{a}_g-a|^2\}=\frac{|b|^2}{N \alpha}+\frac{|h_2|^2\sigma^2}{N\alpha P_2}+\frac{\sigma^2}{NA^2\alpha P_2}.
\end{equation}
Since $\Expected\{|\hat{a}_g-a|^2\}\rightarrow0$ as $N\rightarrow\infty$, the estimator is consistent. Clearly, as $\sigma\rightarrow0$, the MSE performance of this estimator is limited by an error floor of $\frac{\textstyle |b|^2}{\textstyle N\alpha}$. 

\subsection{Cramer-Rao Bounds}
\label{CRB_analysis}

In this section, we obtain CRBs for the estimation of $a$ and $|b|$. The first bound is the standard CRB derived by treating the data symbols $t_{21}, \hd, t_{2N}$ as deterministic unknowns. We exclude the parameter $\sigma_o^2$ from our CRB derivation since its Fisher information is decoupled from the Fisher information of the other parameters. Let $\bsm{\theta}_R\triangleq[\Re\{a\},\Im\{a\}, |b|,\phi_{b1},\hd,\phi_{bN}]^{T}$ be the vector of real unknown parameters (excluding $\sigma_o^2$), and let $\bsm{I}(\bsm{\theta}_R)$ be the corresponding Fisher information matrix (FIM). The matrix $\bsm{I}(\bsm{\theta}_R)$ is given by
\begin{equation}
\bsm{I}(\bsm{\theta}_R)=\begin{bmatrix}
\ \bsm{H}_1&\bsm{H}_2\\
\ \bsm{H}_2^{T}&\bsm{H}_3
\end{bmatrix},
\end{equation}
where 
\begin{equation}
\bsm{H}_1=\frac{1}{\sigma_o^2}\begin{bmatrix}2A^2NP_1& 0&2A^2\Re\{e^{\jmath\phi_b}\bsm{t}_1^{H}\bsm{t}_2\}\\
0&2A^2NP_1&2A^2\Im\{e^{\jmath\phi_b}\bsm{t}_1^{H}\bsm{t}_2\}\\
2A^2\Re\{e^{\jmath\phi_b}\bsm{t}_1^{H}\bsm{t}_2\}&2A^2\Im\{e^{\jmath\phi_b}\bsm{t}_1^{H}\bsm{t}_2\}&2A^2NP_2
\end{bmatrix},
\end{equation}
\begin{equation}
\bsm{H}_2=\frac{1}{\sigma_o^2}\begin{bmatrix}
2A^2\Im\{b^*t_{11}t_{21}^{*}\}&\hd &2A^2\Im\{b^*t_{1N}t_{2N}^{*}\}\\
2A^2\Re\{b^*t_{11}t_{21}^{*}\}&\hd &2A^2\Re\{b^*t_{1N}t_{2N}^{*}\}\\
0&\hd&0
\end{bmatrix},
\end{equation}
and 
\begin{equation}
\bsm{H}_3=\frac{1}{\sigma_o^2}2A^2|b|^2P_2\bsm{I}_N.
\end{equation}
The resulting CRBs on the estimation of $a$ and $|b|$ are\footnote{The notation $[\bsm{A}]_{ij}$ is used to refer to the $(i,j)$th element of the matrix $\bsm{A}$.} 
\begin{equation}
CRB_a=[\bsm{I}^{-1}(\bsm{\theta}_R)]_{11}+[\bsm{I}^{-1}(\bsm{\theta}_R)]_{22},
\end{equation}
and
\begin{equation}
CRB_{|b|}=[\bsm{I}^{-1}(\bsm{\theta}_R)]_{33}.
\end{equation}
Moreover, using the Schur-complement property and letting $\tilde{\bsm{H}}\triangleq\bsm{H}_1-\bsm{H}_2\bsm{H}_3^{-1}\bsm{H}_2^{T}$,
we obtain\footnote{Although it is possible to evaluate the expressions in~\eqref{schur_a} and~\eqref{schur_b} and obtain closed-form expressions for $CRB_a$ amd $CRB_{|b|}$, the resulting expressions are quite lengthy, and are omitted for the sake of brevity.} 
\begin{equation}
\label{schur_a}
CRB_a=[\tilde{\bsm{H}}^{-1}]_{11}+[\tilde{\bsm{H}}^{-1}]_{22},
\end{equation}
and
\begin{equation}
\label{schur_b}
CRB_{|b|}=[\tilde{\bsm{H}}^{-1}]_{33}.
\end{equation}
The bounds $CRB_a$ and $CRB_{|b|}$ exist whenever $\tilde{\bsm{H}}$ is invertible. It can be shown that $\mbox{det}(\tilde{\bsm{H}})=0$ whenever the data symbols are such that the differences $\phi_{1i}-\phi_{2i},\ i=1,\hd, N$ take at most two distinct values, (recall that this is the same condition which results in infinitely many global minima at high SNR). Hence, the bounds do not exist for BPSK modulation since this condition is always met for $M=2$.

The bounds $CRB_a$ and $CRB_{|b|}$ are both functions of $\bsm{t}_1$ and $\bsm{t}_2$, and thus they hold for the particular realizations of $\bsm{t}_1$ and $\bsm{t}_2$ under consideration. In Section~\ref{simulations}, we use Monte-Carlo simulations to average $CRB_a$ and $CRB_{|b|}$ over many realizations of $\bsm{t}_1$ and $\bsm{t}_2$.

Another variant of the CRB, commonly used in the presence of random nuisance parameters is the Modified CRB (MCRB)~\cite{gini2000}. In deriving the MCRB, we compute a modified FIM (MFIM) by first extracting the submatrix of the conventional FIM which corresponds only to the nonrandom parameters ($\Re\{a\}$, $\Im\{a\}$ and $|b|$) and then obtaining the expectation of this submatrix with respect to the random nuisance parameters ($\bsm{t}_2$ in our case). Let $\bsm{\theta}'\triangleq[\Re\{a\}, \Im\{a\}, |b|]^{T}$, the MFIM is   
\begin{equation}
\label{MFIM}
\bsm{I}_m(\bsm{\theta}')\triangleq\Expected\{\bsm{H}_1\}=\frac{1}{\sigma_o^2}\begin{bmatrix}\ A^2NP_1& 0&0\\
0&2A^2NP_1&0\\
0&0&2A^2NP_2
\end{bmatrix}.
\end{equation}
Hence,
\begin{equation}
\label{MCRB_bounds}
MCRB_a=\frac{\sigma_o^2}{A^2NP_1}\ \ \ \mbox{and}\ \ \  MCRB_{|b|}=\frac{\sigma_o^2}{2A^2NP_2}.
\end{equation}
The bounds in~\eqref{MCRB_bounds} have the advantage of possessing a simple closed form. However, they are not as tight as the standard CRB and for this reason we will only use the bounds in~\eqref{schur_a} and~\eqref{schur_b} in our simulation results.

\section{Simulation Results}
\label{simulations}

In this section, we use Monte-Carlo simulations to numerically investigate the performance of the proposed algorithms. Our results are obtained assuming $P_r=P_1=P_2$, and they are averaged using a set of $300$ independent realizations of the channel parameters $h_1$, $h_2$, $g_1$ and $g_2$. These realizations are generated by treating $h_1$ and $h_2$ as correlated complex Gaussian random variables with mean zero, variance $1$ a correlation coefficient $\varrho=0.3$, and also treating $g_1$ and $g_2$ as correlated complex Gaussian random variables with the same mean, variance and correlation coefficient, but independent of $h_1$ and $h_2$. To generate correlated complex Gaussian random variables we follow the approach proposed in~\cite{tellambura_correlated}. The ML and MCML estimates are obtained using a two-dimensional grid-search with a step-size of $10^{-3}$. As before, the SNR is defined as $\gamma=\frac{P_2}{\sigma^2}$. Unless otherwise mentioned, MSE results are for the estimation of $a$.

We begin by comparing the MSE performance of the DML, GML and MCML estimators for $M=2$. We do not show the CRB in this case, since the FIM is singular. The MSE performance of the three estimators is plotted versus SNR for $N=45$ in Fig.~\ref{MSE_BPSK_SNR} and versus $N$ for an SNR of $20$ dB in Fig.~\ref{MSE_BPSK_N}. For the MCML estimator, $2$ pilots are employed to obtain an estimate of $\phi_b$. Both plots show that the DML estimator performs poorly for BPSK modulation and is outperformed by the GML and MCML estimators. Fig.~\ref{MSE_BPSK_SNR} shows the U-shaped behavior of the DML estimator for $M=2$ described in Section~\ref{SNRbehaviorDML}. The MCML estimator is superior to the GML estimator except at very low SNR. Moreover, as SNR increases, the MCML estimator improves steadily while the GML estimator hits an error floor. Fig.~\ref{MSE_BPSK_N} also demonstrates the superiority of the MCML estimator to the GML estimator. 

Next, we study the MSE performance of the DML and GML estimators for $M=4$ (QPSK modulation). The MSE performance of the two estimators is plotted versus SNR for $N=45$ in Fig.~\ref{MSE_QPSK_SNR} and versus $N$ for an SNR of $20$ dB in Fig.~\ref{MSE_QPSK_N}. The bound $CRB_a$ is shown as a reference in both plots. As we can see in Fig.~\ref{MSE_QPSK_SNR}, the DML estimator outperforms the GML estimator, except at low SNR. Moreover, the MSE performance of the DML estimator improves steadily with SNR and approaches $CRB_{a}$, while that of the GML estimator hits an error floor at high SNR. In Fig.~\ref{MSE_QPSK_SNR}, it is also noticed that the MSE of the GML estimator goes below $CRB_{a}$ at low SNR. This should not be a surprise since $CRB_{a}$ is derived by treating $\bsm{t}_2$ as deterministic, and the GML estimator is biased in this case. As we can see in Fig.~\ref{MSE_QPSK_N}, both estimators improve as $N$ increases, but the DML estimator is much closer to $CRB_{a}$. 

In Fig.~\ref{MSE_bQPSK_SNR}, the MSE performance of the DML and the GML estimators for the estimation of $|b|$ and the associated CRB are plotted versus SNR for $M=4$. For the GML estimator, we obtain an estimate of $|b|$ by substituting $\hat{a}_g$ in~\eqref{b_estimator}. Fig.~\ref{MSE_bQPSK_SNR} shows that the GML estimator is slightly better except at high SNR where it appears to hit an error floor and the DML estimator becomes better and approaches $CRB_{|b|}$. It must be noted that $|b|$ plays no role in detection when $M$-PSK modulation is used.

Our next goal is to compare the SER performance of the DML estimator with that of the training-based LS estimator in order to investigate the tradeoffs between accuracy and spectral efficiency that result from following the blind approach. We focus on small sample sizes since this is more suitable for modern-day cellular systems. We note that, when channels are nonreciprocal, the training-based LS estimator is an efficient estimator which coincides with the training-based ML estimator. As a reference, we also plot the SER performance under perfect channel knowledge. The phase $\phi_b$ is estimated blindly using the Viterbi-Viterbi algorithm\footnote{Using the Viterbi-Viterbi algorithm, we estimate $\phi_b$ by $\hat{\phi}_b\triangleq\frac{1}{M}\angle \sum_{i=1}^{N} |\tilde{z}_i(\hat{a})|^2e^{\jmath M\angle\tilde{z}_i(\hat{a})+\pi},$ where $\tilde{z}_i(\hat{a}), i=1,\hd, N$ are the resulting $N$ samples after the estimate $\hat{a}$ is used to remove self-interference.}, and a small number of pilots is employed to resolve the resulting $M$-fold ambiguity using the unique word method~\cite{synchronization}. In Fig.~\ref{SER_QPSK1}, we show the SER performance of the two algorithms assuming that the channel is fixed for the duration of 20 samples. For the DML estimator, 2 pilots and 18 data symbols are transmitted. The 2 pilots are used to resolve the $M$-fold ambiguity, and all 20 samples are used to blindly estimate $a$. For the LS estimator, we estimate $a$ and $b$ using 4 pilot symbols\footnote{For the training-based LS algorithm, the pilots are chosen such that the pilot vectors from the two terminals are orthogonal to each other.} and we transmit 16 data symbols. As we can see from Fig.~\ref{SER_QPSK1}, the SER performance of the DML estimator is very close to that of the LS estimator (approximately 0.6 dB away). In Fig.~\ref{SER_QPSK2}, we assume that the channel is fixed for the duration of 40 samples. In the DML case 4 pilots and 36 data symbols are transmitted, and in the LS case 8 pilots and 32 data symbols are transmitted. The performance of DML estimator is again very close to that of the LS estimator (approximately 0.4 dB away), and it is only $1.5$ dB away from the performance under perfect CSI. In both examples, for the DML algorithm we use $90\%$ of the channel coherence time to transmit data and $10\%$ to transmit pilots, while for the LS algorithm we are use $80\%$ of the coherence time to transmit data and $20\%$ to transmit pilots, which demonstrates that the DML estimator offers a better tradeoff between accuracy and spectral efficiency.

It is worth noting that the channel coherence times considered in our SER studies are compatible with current wideband cellular systems. For instance, considering a wideband system with $256$ carriers and a sampling frequency of $10^{-6}$s, the channel has to be invariant for $5.12$ms in the first example and for $10.24$ms in the second example. Assuming a central frequency of $2.4$GHz and a vehicle speed of $10$m/s, the coherence time is around $12.5$ms, which is larger than the required coherence time in both examples.   

\section{Conclusions}
\label{conclusions}

In this work, we proposed the DML algorithm for blind channel estimation in AF TWRNs employing $M$-PSK modulation. The DML estimator was derived by treating the data symbols as deterministic unknowns. For comparison, we derived the GML estimator by treating the data symbols as Gaussian-distributed nuisance parameters. We showed that the DML estimator is consistent. For $M>2$, we showed that it approaches the true channel with high probability as SNR increases and that its MSE performance is superior to that of the GML estimator for medium-to-high SNR and approaches the derived CRB at high SNR. In contrast, the GML estimator suffers from an error floor at high SNR. We also compared the SER performance of the DML estimator with that of the training-based LS estimator and showed that the DML approach provides a better tradeoff between accuracy and spectral efficiency. For $M=2$, however, we showed that the DML estimator performs poorly and proposed the MCML estimator which explicitly takes into account the structure of the BPSK signal. This estimator outperforms the GML estimator except at very low SNR and approaches the true channel at high SNR.

\subsection*{Acknowledgments}
The authors of this work would like to thank Mr. Ali Shahrad for his help in proving Theorem 2 and the Associate Editor as well as the anonymous reviewers for their valuable comments and suggestions on the original manuscript.

\appendices

\section{}
\label{appendixD}
In this appendix, we derive the closed-form expression for the variance $\mathcal{V}(v)$ of the RV $|y(v)|$. We recall that
\begin{equation}
\label{eqappD1}
y(v)=Av\sqrt{P_1}e^{\jmath\phi_1}+Ab\sqrt{P_2}e^{\jmath\phi_2}+Ah_2n+\eta.
\end{equation}
The variance of $|y(v)|$ is $\mathcal{V}(v)=\Expected\{|y(v)|^2\}-\Expected\{|y(v)|\}^2$. It can be easily shown that
\begin{equation}
\label{first_part}
\Expected\{|y(v)|^2\}=A^2|v|^2P_1+A^2|b|^2P_2+\sigma_o^2.
\end{equation}
To find a closed form for $\Expected\left\{|y(v)|\right\}$, we first obtain the conditional expectation $\Expected\left\{|y(v)|\ \left|\ \phi_1,\phi_2\right.\right\}$. We can see from~\eqref{eqappD1} that $\Re\{y(v)\}$ and $\Im\{y(v)\}$ conditioned on $\phi_1$ and $\phi_2$, are Gaussian-distributed with conditional means
\begin{equation*}
\label{real_condtional_mean}
\begin{split}
\Expected\{\Re(y(v))\ |\ \phi_1,\phi_2\}&=A|v|\sqrt{P_1}\cos(\phi_{v}+\phi_1)+A|b|\sqrt{P_2}\cos(\phi_b+\phi_2),
\end{split}
\end{equation*}
and
\begin{equation*}
\label{real_condtional_mean}
\begin{split}
\Expected\{\Im(y(v))\ |\ \phi_1,\phi_2\}=A|v|\sqrt{P_1}\sin(\phi_{v}+\phi_1)+A|b|\sqrt{P_2}\sin(\phi_b+\phi_2)
\end{split}
\end{equation*}
and a conditional variance of $\sigma_o^2/2$. Hence, when conditioned on $\phi_1$ and $\phi_2$, $|y(v)|$ is a noncentral Chi random variable whose mean is given by~\cite{oberto2006}
\begin{equation*}
\label{chi_mean}
\Expected\left\{|y(v)|\ \left|\ \phi_1,\phi_2\right.\right\}=\sqrt{\frac{\pi\sigma_o^2}{4}}L_{\text{\tiny$1/2$}}\left(-\lambda(v;\phi_1,\phi_2)\right),
\end{equation*}
where 
\begin{equation}
\label{first_lambda}
\begin{split}
\lambda(v;\phi_1,\phi_2)&\triangleq\frac{1}{\sigma_o^2}\left(\Expected\{\Re(y(v))\ |\phi_1,\phi_2\}^2+\Expected\{\Im(y(v))\ |\phi_1,\phi_2\}^2\right)\\
&=\frac{1}{\sigma_o^2}\left(A^2|v|^2P_1+A^2|b|^2P_2+2A^2|v||b|\sqrt{P_1P_2}\cos(\phi_{v}-\phi_b+\phi_1-\phi_2)\right)
\end{split}
\end{equation}
and $L_{\text{\tiny$1/2$}}(x)=e^{x/2}\left[(1-x)I_0(x/2)+xI_1(x/2)\right]$ is the Laguerre polynomial with parameter $1/2$, and $I_{\tau}(\cdot)$ is the Modified Bessel Function of the First Kind of order $\tau$~\cite{abramowitz}. The last equation shows that $\lambda(v;\phi_1,\phi_2)$ depends on the difference $(\phi_1-\phi_2)\ \mbox{mod}\ 2\pi$, i.e., $\lambda(v;\phi_1,\phi_2)=\lambda(v;\phi_1-\phi_2)$. The difference, $(\phi_1-\phi_2)\ \mbox{mod}\ 2\pi$ takes the values $\varphi_k=\frac{2\pi k}{M},\ k=0,\hdots, M-1$ with equal probability. Let $\lambda_k(v)\triangleq \lambda(v;\phi_1-\phi_2=\varphi_k)$, the unconditional mean $\Expected\left\{|y(v)|\right\}$ is given by
\begin{equation}
\label{unconditional_mean}
\Expected\{|y(v)|\}=\sum_{k=0}^{M-1}\sqrt{\frac{\pi\sigma_o^2}{4M^2}}L_{\text{\tiny$1/2$}}\left(-\lambda_k(v)\right),
\end{equation}
which completes the proof of~\eqref{var_exp}. 

\section{}
\label{appendixE}
In this appendix, we show that $\mathcal{V}(v)$ has a unique global minimum at $v=0$. Let $\nu\triangleq \sqrt{\frac{1}{\sigma_o^2}\left(A^2|v|^2P_1\right)}$, $\zeta\triangleq\sqrt{\frac{1}{\sigma_o^2}\left(A^2|b|^2P_2\right)}$, and let
\begin{equation}
\label{appE3}
D(\nu)\triangleq\nu^2+\zeta^2-\frac{\pi}{4M^2}\left(\sum\limits_{k=0}^{M-1}L_{\text{\tiny$1/2$}}\left(-\left[\nu^2+\zeta^2+2\nu\zeta\cos(\theta_k(v))\right]\right)\right)^2.
\end{equation}
It is sufficient to prove that $D(\nu)$ has a unique global minimum at $\nu=0$. It is straightforward to verify that $L_{\text{\tiny$1/2$}}(-x)$ is a strictly concave function. Therefore,
\begin{equation} 
\label{concave}
\begin{split}
\frac{1}{M}\sum\limits_{k=0}^{M-1}L_{\text{\tiny$1/2$}}\left(-\left[\nu^2+\zeta^2+2\nu\zeta\cos(\theta_k(v))\right]\right)
&\leq L_{\text{\tiny$1/2$}}\left(-\frac{1}{M}\sum\limits_{k=0}^{M-1}\left[\nu^2+\zeta^2+2\nu\zeta\cos(\theta_k(v))\right]\right)\\
&=L_{\text{\tiny$1/2$}}\left(-\left[\nu^2+\zeta^2\right]\right),
\end{split}
\end{equation}
where we have used the fact that $\sum\limits_{k=0}^{M-1}\cos(\phi_v-\phi_b+\frac{2k\pi}{M})=0$. Therefore,
\begin{equation}
\label{appE4}
D(\nu)\geq \nu^2+\zeta^2-\frac{\pi}{4}\left(L_{\text{\tiny$1/2$}}\left(-\left[\nu^2+\zeta^2\right]\right)\right)^2.
\end{equation}
For $\zeta\neq0$, the equality holds if and only if $\nu=0$. Let $F(\nu)\triangleq \nu^2+\zeta^2-\frac{\pi}{4}\left(L_{\text{\tiny$1/2$}}\left(-\left[\nu^2+\zeta^2\right]\right)\right)^2$, it is sufficient to show that $F(\nu)$ has a unique global minimum at $\nu=0$. We have
\begin{equation}
\label{appE5}
\frac{d}{d\nu}F(\nu)=2\nu-\nu\pi L_{\text{\tiny$1/2$}}\left(-\left[\nu^2+\zeta^2\right]\right)\mathcal{P}(\nu^2+\zeta^2),
\end{equation}
where $\mathcal{P}(x)\triangleq\frac{1}{2}e^{-x/2}\left(I_0(x/2)+I_1(x/2)\right)$. Since $\frac{d}{d\nu}F(\nu)=0$ for $\nu=0$, it is sufficient to show that $\frac{dF(\nu)}{d\nu}>0$ for $\nu\neq0$. In other words, we have to show that $\frac{\pi}{2}L_{\text{\tiny$1/2$}}\left(-\left[\nu^2+\zeta^2\right]\right)\mathcal{P}(\nu^2+\zeta^2)<1$. Let $\rho\triangleq\nu^2+\zeta^2$ and let
\begin{equation}
\label{appE7}
\begin{split}
Q(\rho)&\triangleq \frac{\pi}{2}L_{\text{\tiny$1/2$}}\left(-\rho\right)\mathcal{P}(\rho)\\
&=\frac{\pi}{4}e^{-\rho}\left[(1+\rho)I_0\left(\frac{\rho}{2}\right)+\rho I_1\left(\frac{\rho}{2}\right)\right]\left[I_0\left(\frac{\rho}{2}\right)+I_1\left(\frac{\rho}{2}\right)\right],
\end{split}
\end{equation}
we have to show that $Q(\rho)<1$ for $\rho>0$. We will do this by showing that $\frac{d}{d\rho}Q(\rho)>0$ and that $\lim\limits_{\rho\rightarrow\infty}Q(\rho)=1$. We have
\begin{equation}
\label{appE8}
\frac{d}{d\rho}Q(\rho)=\frac{\pi}{2}e^{-\rho}\left[\frac{1}{4}I_0\left(\frac{\rho}{2}\right)^2-\frac{1}{4}I_1\left(\frac{\rho}{2}\right)^2-\frac{1}{2\rho}I_0\left(\frac{\rho}{2}\right)I_1\left(\frac{\rho}{2}\right)\right].
\end{equation}
Let $U(\rho)\triangleq \rho I_0\left(\frac{\rho}{2}\right)^2-\rho I_1\left(\frac{\rho}{2}\right)^2-2I_0\left(\frac{\rho}{2}\right)I_1\left(\frac{\rho}{2}\right)$. Hence $\frac{d}{d\rho}Q(\rho)=\frac{\pi}{8\rho}e^{-\rho}U(\rho)$. Moreover, $\frac{d}{d\rho}U(\rho)=\frac{2}{\rho}I_0\left(\frac{\rho}{2}\right)I_1\left(\frac{\rho}{2}\right)>0$. Thus, $U(\rho)$ is strictly increasing for $\rho>0$. Since $U(0)=0$, we have that $U(\rho)>0$ for $\rho>0$, which implies that $\frac{d}{d\rho}Q(\rho)>0$ for $\rho>0$. 

It remains to show that $\lim\limits_{\rho\rightarrow\infty}Q(\rho)=1$. For large arguments, the $I_{\tau}(\cdot)$ has the following asymptotic expansion~\cite{abramowitz}
\begin{equation}
\label{appE10}
I_{\tau}(x)\approx\frac{e^{x}}{\sqrt{2\pi x}}\left\{1-\frac{\varepsilon-1}{8x}+\frac{(\varepsilon-1)(\varepsilon-9)}{2!(8x)^2}-\hdots\right\},
\end{equation}
where $\varepsilon=4\tau^2$. We rewrite $Q(\rho)$ as
\begin{equation}
Q(\rho)=\frac{\pi}{4}e^{-\rho}\left[(1+\rho)I_0\left(\frac{\rho}{2}\right)^2+(1+2\rho)I_0\left(\frac{\rho}{2}\right)I_1\left(\frac{\rho}{2}\right)+\rho I_1\left(\frac{\rho}{2}\right)^2\right].
\end{equation}
Using the expansion in~\eqref{appE10}, we obtain
\begin{equation}
\label{appE11}
\lim\limits_{\rho\rightarrow\infty}\frac{\pi}{4}e^{-\rho}(1+\rho)I_0\left(\frac{\rho}{2}\right)^2=\lim\limits_{\rho\rightarrow\infty}\frac{\pi}{4}e^{-\rho}\rho I_1\left(\frac{\rho}{2}\right)^2=\frac{1}{4},
\end{equation}
and
\begin{equation}
\label{appE12}
\lim\limits_{\rho\rightarrow\infty}\frac{\pi}{4}e^{-\rho}(1+2\rho)I_0\left(\frac{\rho}{2}\right)I_1\left(\frac{\rho}{2}\right)=\frac{1}{2}.
\end{equation}
Therefore, $\lim\limits_{\rho\rightarrow\infty}Q(\rho)=1$, which completes the proof. 

\section{}
\label{appendixF}
In this appendix, we prove that $V_N(v)$ converges uniformly in probability to $\mathcal{V}(v)$ when $a$, $b$ and $v$ belong to compact sets. Suppose $v\in\mathcal{C}$. By definition, $V_N(v)$ converges uniformly in probability to $\mathcal{V}(v)$ when $\sup_{v\in \mathcal{C}}\big|V_N(v)-\mathcal{V}(v)\big|$ converges in probability to zero as $N\rightarrow\infty$. Since $y_i(v)=\tilde{z}_i(a-v)=z_i-A(a-v)t_{1i},$ $V_N(v)$ is a function of the parameter $v$, the observations $\bsm{z}$ and known data symbols $\bsm{t}_1$. According to Lemma 2.9 in [28, Ch. 36], a sufficient condition for uniform convergence in probability when $\mathcal{C}$ is compact is the existence of a function $\mathcal{F}_N({\boldsymbol z},{\boldsymbol t_1})$ with bounded expectation $\Expected\{\mathcal{F}_N({\boldsymbol z},{\boldsymbol t_1})\}$ such that for all $v_1, v_2\in \mathcal{C}$, $\big|V_N(v_1)-V_N(v_2)\big|\leq \mathcal{F}_N({\boldsymbol z},{\boldsymbol t_1})\big|v_1-v_2\big|$. 

Using the triangular inequality, we obtain
\begin{equation}
\label{appF1}
\begin{split}
\big|V_N(v_1)-V_N(v_2)\big|&\leq\frac{1}{N-1}\sum\limits_{i=1}^{N}\left|\left(|y_i(v_1)|-\frac{1}{N}\sum\limits_{k=1}^{N}|y_k(v_1)|\right)^2-\left(|y_i(v_2)|-\frac{1}{N}\sum\limits_{k=1}^{N}|y_k(v_2)|\right)^2\right|\\
&=\frac{1}{N-1}\sum\limits_{i=1}^{N}\Bigg(\left||y_i(v_1)|-|y_i(v_2)|-\frac{1}{N}\sum\limits_{k=1}^{N}|y_k(v_1)|+\frac{1}{N}\sum\limits_{k=1}^{N}|y_k(v_2)|\right|\\
&\ \ \ \ \ \ \ \ \ \ \ \ \ \times\left||y_i(v_1)|+|y_i(v_2)|-\frac{1}{N}\sum\limits_{k=1}^{N}|y_k(v_1)|-\frac{1}{N}\sum\limits_{k=1}^{N}|y_k(v_2)|\right|\Bigg).
\end{split}
\end{equation}
For $i=1,\hd, N,$ let
\begin{equation}
\label{termA}
\Upsilon_i(v_1,v_2)\triangleq \left||y_i(v_1)|-|y_i(v_2)|-\frac{1}{N}\sum\limits_{k=1}^{N}|y_k(v_1)|+\frac{1}{N}\sum\limits_{k=1}^{N}|y_k(v_2)|\right|,
\end{equation}
and
\begin{equation}
\label{termB}
\Lambda_i(v_1,v_2)\triangleq \left||y_i(v_1)|+|y_i(v_2)|-\frac{1}{N}\sum\limits_{k=1}^{N}|y_k(v_1)|-\frac{1}{N}\sum\limits_{k=1}^{N}|y_k(v_2)|\right|. 
\end{equation}
We can use the triangular inequality to get
\begin{equation}
\label{appF2}
\begin{split}
\Upsilon_i(v_1,v_2)&\leq |y_i(v_1)-y_i(v_2)|+\frac{1}{N}\sum\limits_{k=1}^{N}\left|y_k(v_1)-y_k(v_2)\right|\\
&=2AP_1|v_1-v_2|,
\end{split}
\end{equation}
and
\begin{equation}
\label{appF3}
\begin{split}
\Lambda_i(v_1,v_2)&\leq |y_i(v_1)|+|y_i(v_2)| +\frac{1}{N}\sum\limits_{k=1}^{N}|y_k(v_1)|+\frac{1}{N}\sum\limits_{k=1}^{N}|y_k(v_2)|.
\end{split}
\end{equation}
Noting that $|y_i(v)|\leq |z_i|+A|a|P_1+AP_1|v|$, we further obtain
\begin{equation}
\Lambda_i(v_1,v_2)\leq 2|z_i|+ \frac{2}{N}\sum\limits_{k=1}^{N}|z_k|+4AP_1|a|+2AP_1(|v_1|+|v_2|).
\end{equation}
Since the set $\mathcal{C}$ is compact, there exists $\mathcal{U}>0$ such that $|v|\leq\mathcal{U}, \forall v\in \mathcal{C}$.
Hence, we obtain the following upper bound on $\Lambda_i(v_1, v_2)$:
\begin{equation}
\label{appF4}
\Lambda_i(v_1,v_2)\leq2|z_i|+ \frac{2}{N}\sum\limits_{k=1}^{N}|z_k|+2AP_1|a|+4A\mathcal{U}.
\end{equation}
Combining~\eqref{appF2}, and~\eqref{appF4}, we obtain
\begin{equation}
\label{appF4_1}
\begin{split}
\big|V_N(v_1)-V_N(v_2)\big|&\leq\frac{8AP_1}{N-1}\left[ANP_1\mathcal{U}+ANP_1|a|+\sum\limits_{i=1}^{N}|z_i|\right]|v_1-v_2|.
\end{split}
\end{equation}
Noting that $z_i=y_i(a)$, we have from~\eqref{unconditional_mean} that
\begin{equation}
\begin{split}
\Expected\{|z_i|\}&\leq \sqrt{\frac{\pi\sigma_o^2}{4}} L_{\text{\tiny$1/2$}}\left(-\frac{1}{\sigma_o^2}\left(A|a|P_1+A|b|P_2\right)^2\right).
\end{split}
\end{equation}
Since both $a$, and $b$ belong to compact sets, there exists $\mathcal{M}_1>0$ and $\mathcal{M}_2>0$ such that $|a|\leq\mathcal{M}_1$ and  $|b|\leq\mathcal{M}_2$. Hence,
\begin{equation}
\Expected\{|z_i|\}\leq\sqrt{\frac{\pi\sigma_o^2}{4}} L_{\text{\tiny$1/2$}}\left(-\frac{1}{\sigma_o^2}\left(A\mathcal{M}_1P_1+A\mathcal{M}_2P_2\right)^2\right).
\end{equation}
Letting, $\mathcal{G}\triangleq \sqrt{\frac{\pi\sigma_o^2}{4}} L_{\text{\tiny$1/2$}}\left(-\frac{1}{\sigma_o^2}\left(A\mathcal{M}_1P_1+A\mathcal{M}_2P_2\right)^2\right)$, we obtain
\begin{equation}
\label{appF5}
\Expected\left\{\frac{8AP_1}{N-1}\left[ANP_1\mathcal{U}+ANP_1|a|+\sum\limits_{i=1}^{N}|z_i|\right]\right\}\leq16A^2(\mathcal{U}+\mathcal{M}_1)+16A\mathcal{G},
\end{equation}
which completes the proof.

\section{}
\label{appendixA}
In this appendix, we prove Lemma~\ref{lemma_highSNR}. It is obvious from~\eqref{defXv} that $X(v)\geq0$ with equality if and only if the terms $|Avt_{1i}+Abt_{2i}|,\ i=1,\hd, N$ are all equal. Because of the constant modulus nature of the data symbols, this occurs at $v=0$ for any $M$, which means that there is always a global minimum at $v=0$ (i.e., at $u=a$). To see whether we also have $X(v)=0$ for some $v\neq0$, we rewrite $|Avt_{1i}+Abt_{2i}|$ as
\begin{equation}
\label{highSNR1}
\begin{split}
|Avt_{1i}+Abt_{2i}|&=|A|v|\sqrt{P_1}e^{\jmath\phi_v+\phi_{1i}}+A|b|\sqrt{P_2}e^{\jmath{\phi_b+\phi_{2i}}}|\\
&=|A|v|\sqrt{P_1}e^{\jmath\phi_v-\phi_b+\phi_{1i}-\phi_{2i}}+A|b|\sqrt{P_2}|.
\end{split}
\end{equation}
Let $\chi_i\triangleq \cos(\phi_v-\phi_b+\phi_{1i}-\phi_{2i}),\ i=1,\hd, N$. It is clear from~\eqref{highSNR1} that $X(v)$ is zero whenever the terms $\chi_i,\ i=1,\hdots,N$ are all equal. We denote by $\mathcal{E}$ the event that the terms $\chi_i,\ i=1,\hdots,N$ are equal. Let $\psi_i\triangleq \phi_{1i}-\phi_{2i}$, then the values $\psi_i$, $i=1,\hdots,N$ are i.i.d. realizations of the discrete random variable $\Psi$ which takes values from the set $S_{\Psi}=\left\{\frac{2\ell\pi}{M}, \ell=0,\hdots,M-1\right\}$ with equal probability. We will now show that the event $\mathcal{E}$ occurs if and only if the values $\psi_i,\ i=1,\hdots,N$ are chosen from the same size $2$ subset of $S_{\Psi}$. Suppose that $\psi_1$ is fixed, and that we are choosing the remaining phases such that $\cos(\psi_i+\vartheta)=\cos(\psi_1+\vartheta)$ for $i=2,\hdots,N$. If $\psi_{\kappa}$ is different from $\psi_1$ for some index $\kappa$, then  $\cos(\psi_{\kappa}+\phi_v-\phi_b)=\cos(\psi_1+\phi_v-\phi_b)$ can only be satisfied if $2(\phi_v-\phi_b)=(-\psi_1-\psi_{\kappa})$. For the remaining phases with indices $i=2,\hdots,N,\ i\neq\kappa$, the equality $\cos(\psi_i+\phi_v-\phi_b)=\cos(\psi_1+\phi_v-\phi_b)$ holds only if $\psi_i=\psi_1$ or $\psi_i=\psi_{\kappa}$. Therefore, the terms $\chi_i,\ i=1,\hdots,N$ are equal if and only if the phases $\psi_i,\ i=1,\hdots,N$ take at most two distinct values, i.e, the probability that $\mathcal{E}$ occurs is
\begin{equation}
\label{appC1}
P\left(\mathcal{E}\right)= {M\choose2} \frac{2^N}{M^N}=\left(\frac{2}{M}\right)^{N-1}(M-1),
\end{equation}
where $M\choose2$ is the number of distinct subsets of size $2$ that can be chosen from a set of size $M$. Suppose that $\mathcal{E}$ occurs and that $\psi_{\text{\tiny$\Sigma$}}$ is the sum of the two distinct phase values, then $X(v)=0$ for all the values of $v$ that satisfy $2(\phi_v-\phi_b)=-\psi_{\text{\tiny$\Sigma$}}$, which means that there are infinitely many global minimizers of $X(v)$. Hence, the probability that $X(v)$ has a unique minimum at $v=0$ is 
$P{\text{\tiny$M,N$}}=1-\left(\frac{2}{M}\right)^{N-1}(M-1)$.
\bibliographystyle{IEEEtran}
\bibliography{IEEEabrv,twrn_bib}

\begin{figure}[!ht]
\centering
\includegraphics[width=4.4in, height=3.3in]{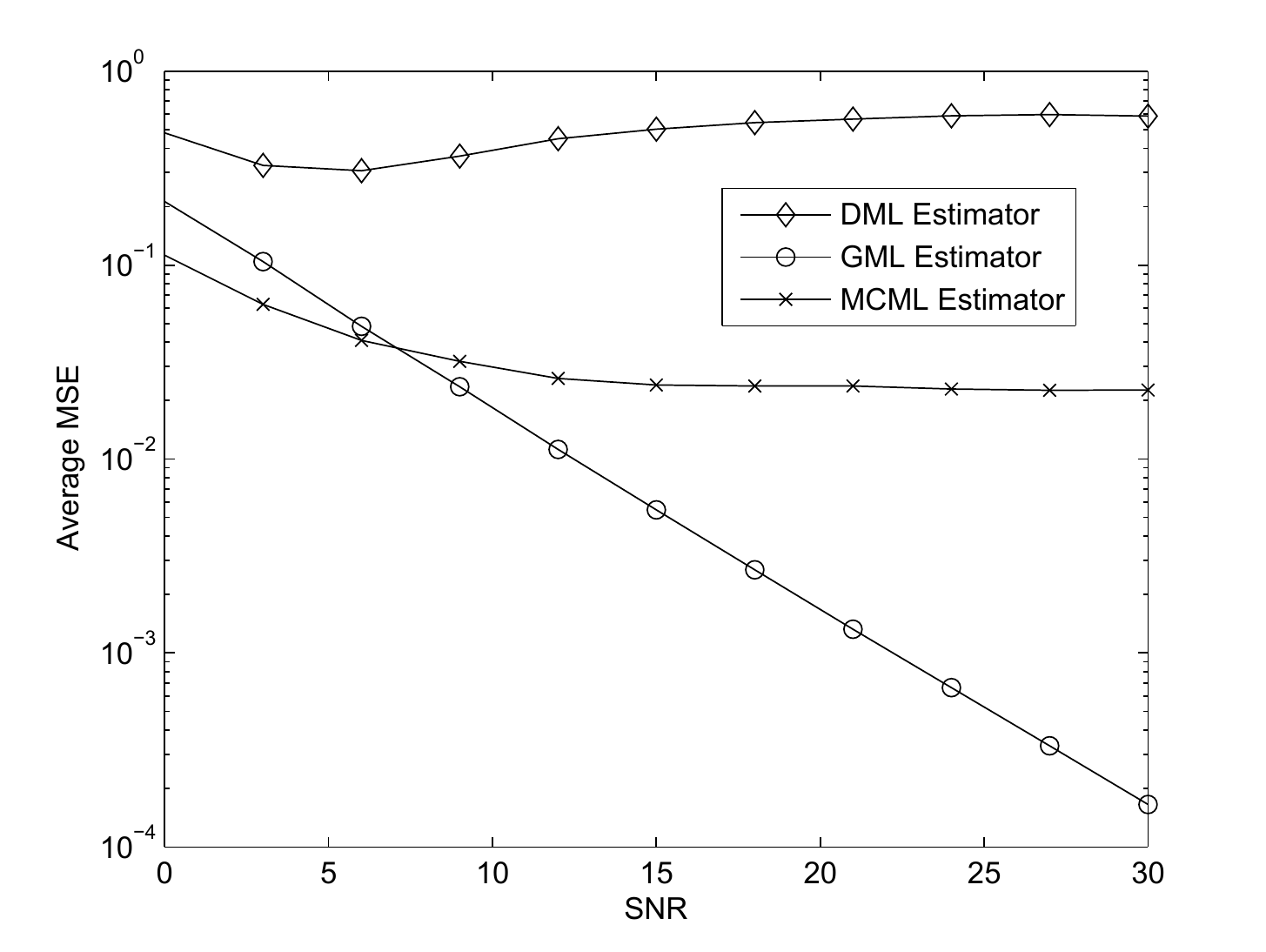}
\caption{Average MSE of the DML, GML and MCML algorithms for the estimation of $a$, plotted versus SNR for $M=2$ (BPSK modulation) and $N=45$. For the MCML algorithm, we use two pilots to estimate $\phi_b$.}
\label{MSE_BPSK_SNR}
\end{figure}

 \begin{figure}[!ht]
\centering
\includegraphics[width=4.4in, height=3.3in]{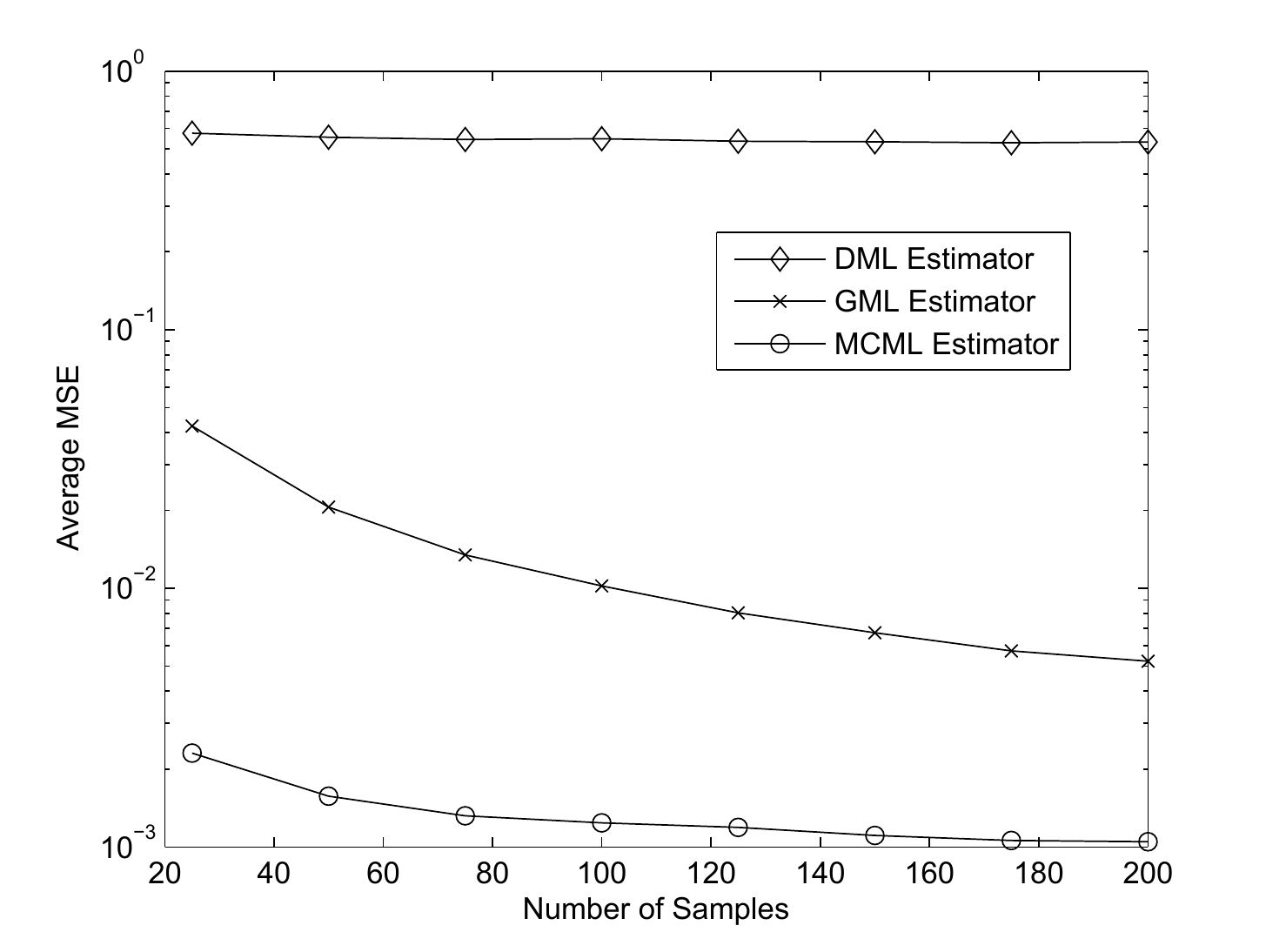}
\caption{Average MSE of the DML, GML and MCML algorithms for estimation of $a$, plotted versus $N$ for $M=2$ and an SNR of $20$ dB. For the MCML algorithm, we use two pilots to estimate $\phi_b$.}
\label{MSE_BPSK_N}
\end{figure}

 \begin{figure}[!ht]
\centering
\includegraphics[width=4.4in, height=3.3in]{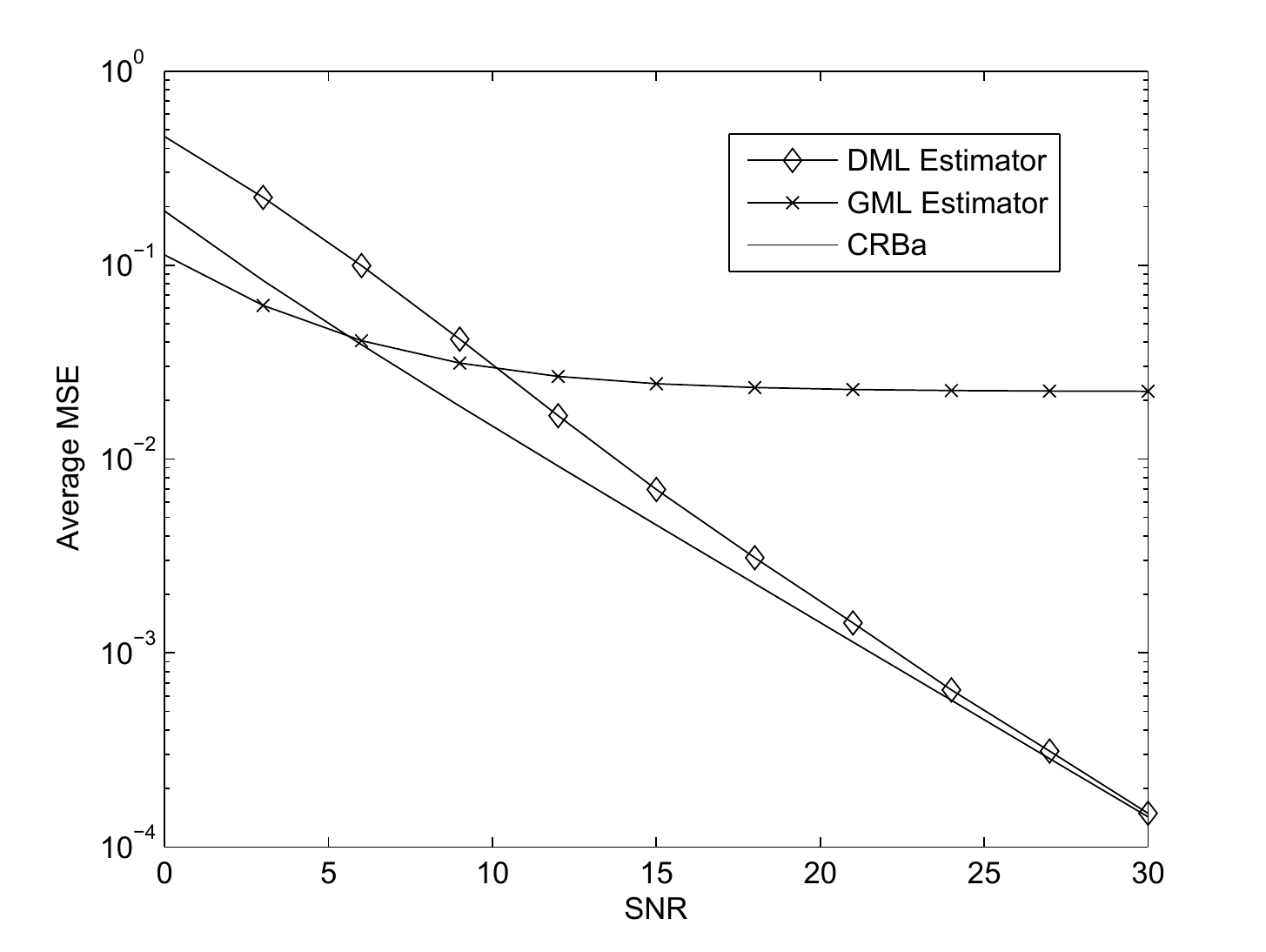} 
\caption{Average MSE of the DML and GML algorithms for the estimation of $a$ and the bound $CRBa$ plotted versus SNR for $M=4$ (QPSK modulation) and $N=45$.}
\label{MSE_QPSK_SNR}
\end{figure}

 \begin{figure}[!ht]
\centering
\includegraphics[width=4.4in, height=3.3in]{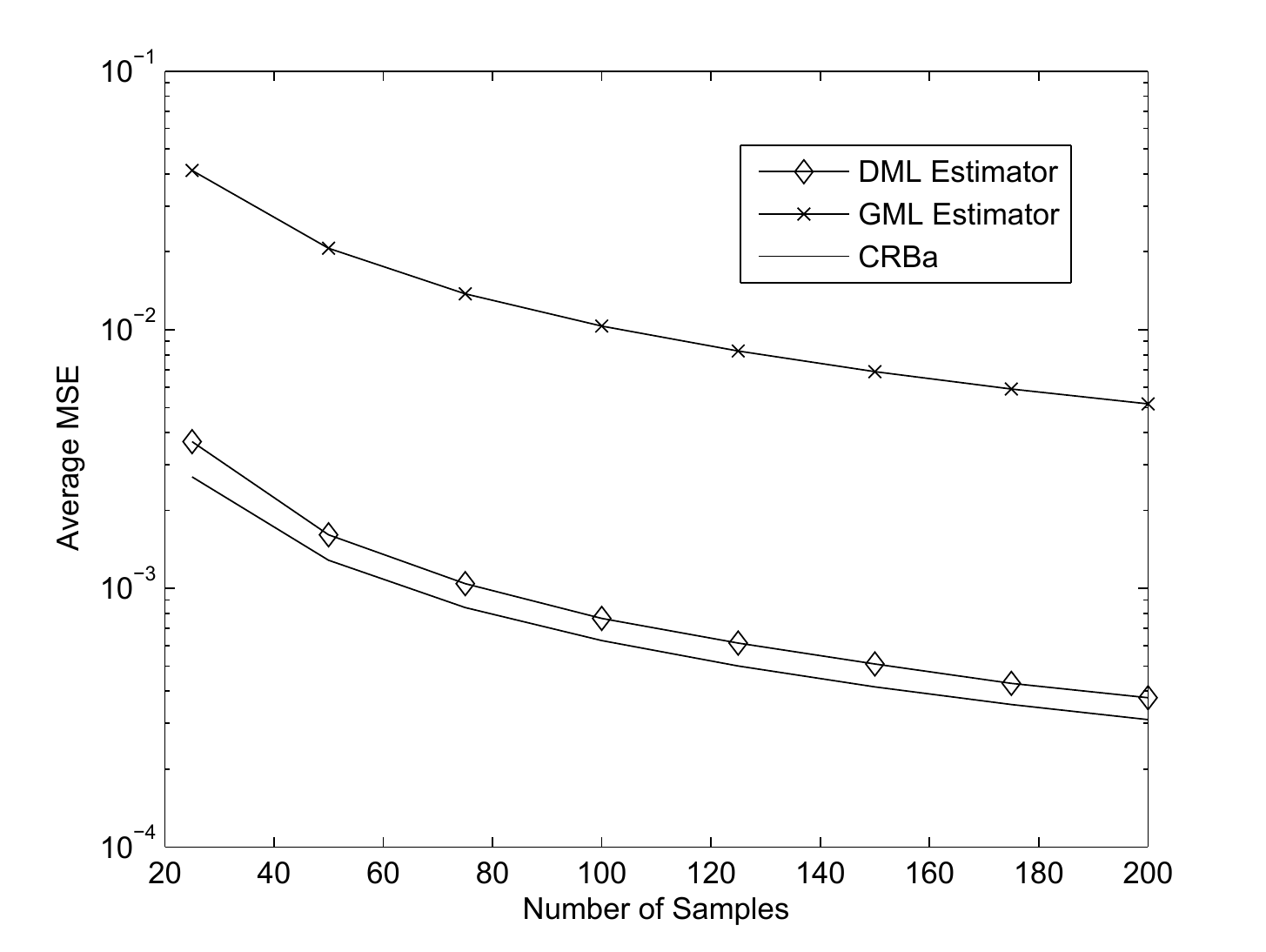} 
\caption{Average MSE of the DML and GML algorithms for the estimation of $a$ and the bound $CRBa$ plotted versus $N$ for $M=4$ and an SNR of $20$ dB.}
\label{MSE_QPSK_N}
\end{figure}

 \begin{figure}[!ht]
\centering
\includegraphics[width=4.4in, height=3.3in]{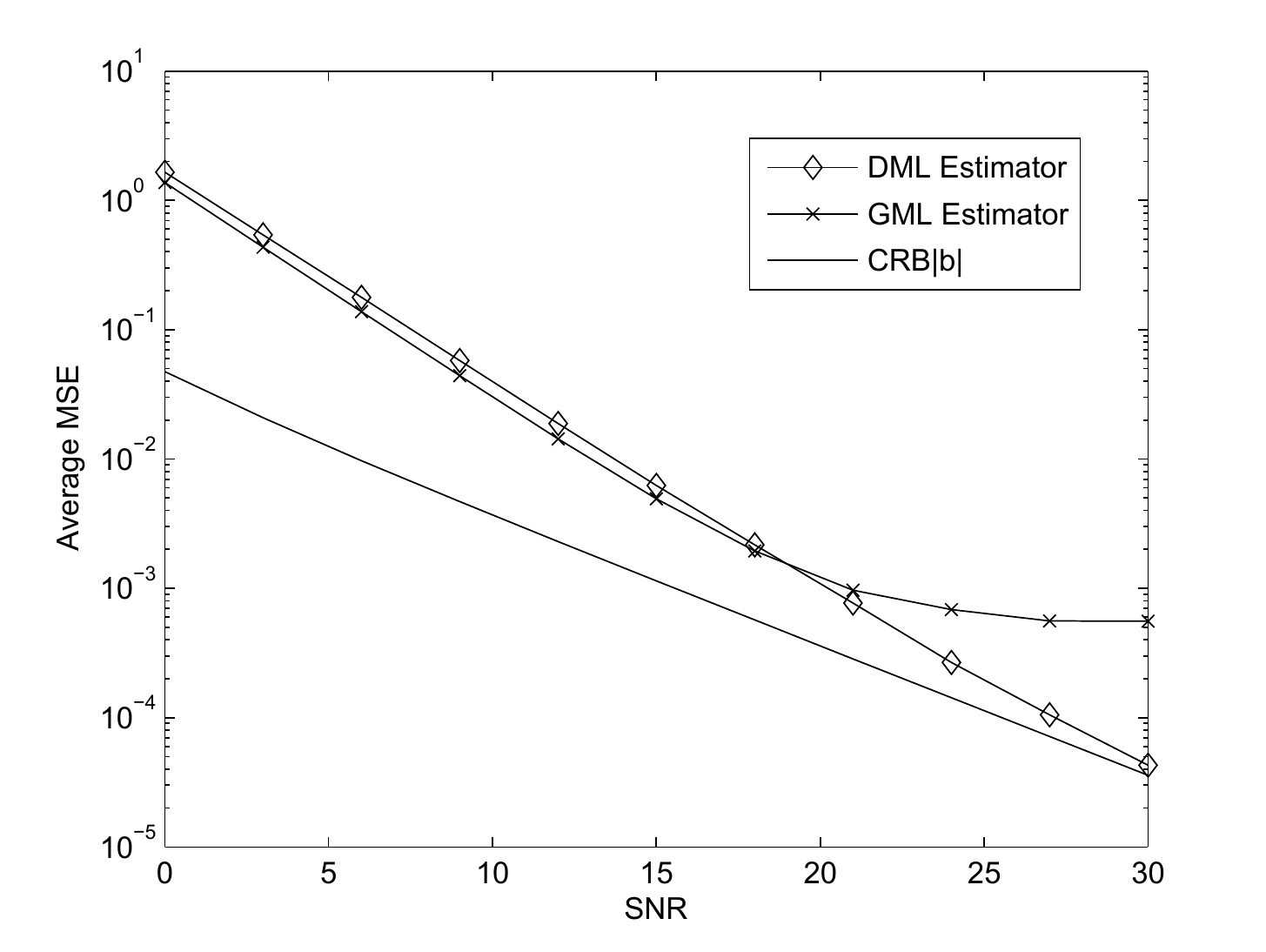}
\caption{Average MSE of the DML and GML algorithms for the estimation of $|b|$ and the bound $CRB_{|b|}$ plotted versus SNR for $M=4$ and $N=45$.}
\label{MSE_bQPSK_SNR}
\end{figure}

 \begin{figure}[!ht]
\centering
\includegraphics[width=4.4in, height=3.3in]{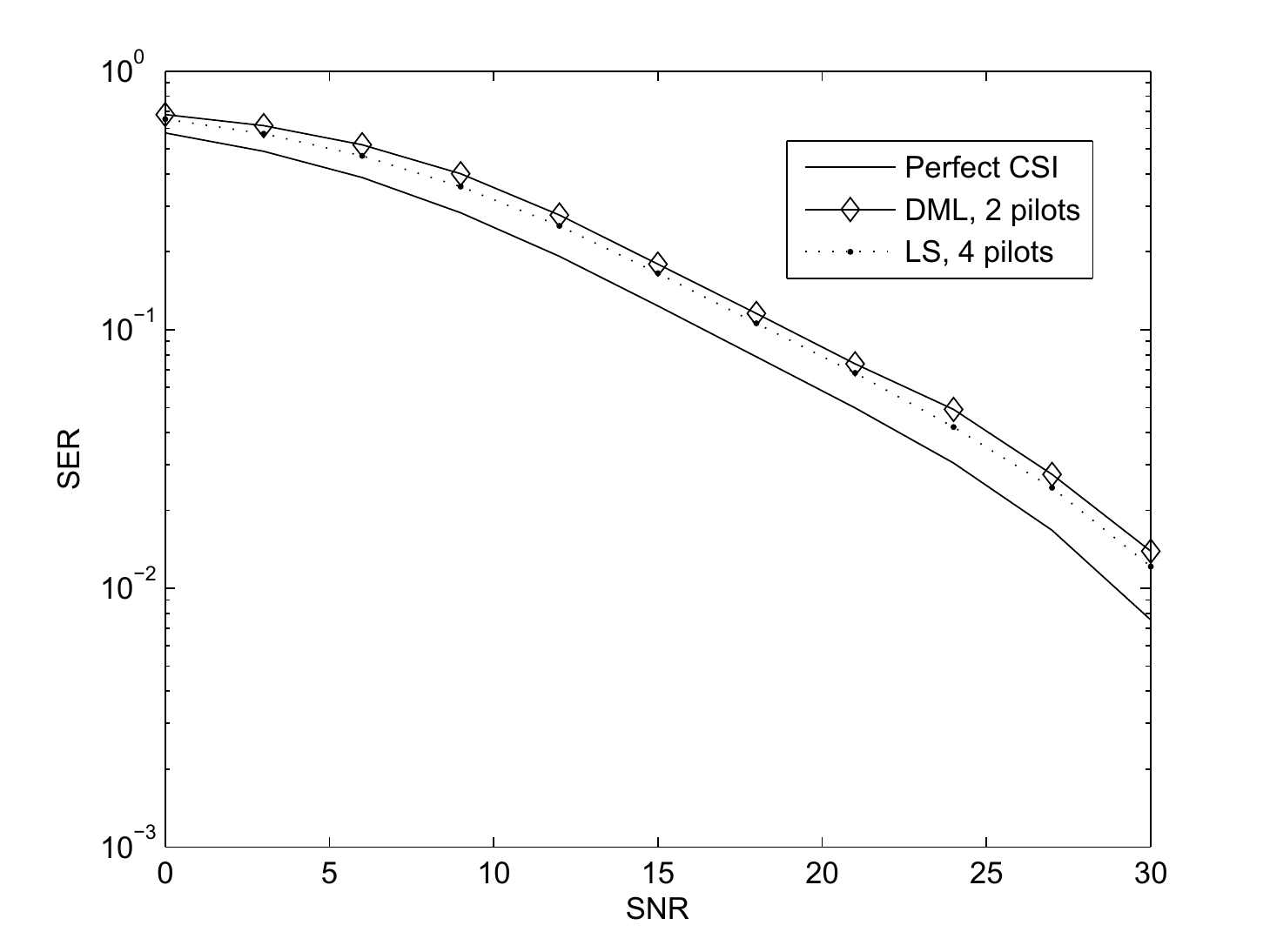} 
\caption{Average SER versus SNR for the DML and LS estimators for $M=4$, assuming the channel is fixed for the duration of 20 samples. We use 2 pilots to resolve the $M$-fold ambiguity in $\phi_b$ for the DML estimator, and 4 pilots to estimate $a$ and $b$ for the LS estimator. Also plotted is the SER performance under perfect CSI.}
\label{SER_QPSK1}
\end{figure}

 \begin{figure}[!ht]
\centering
\includegraphics[width=4.4in, height=3.3in]{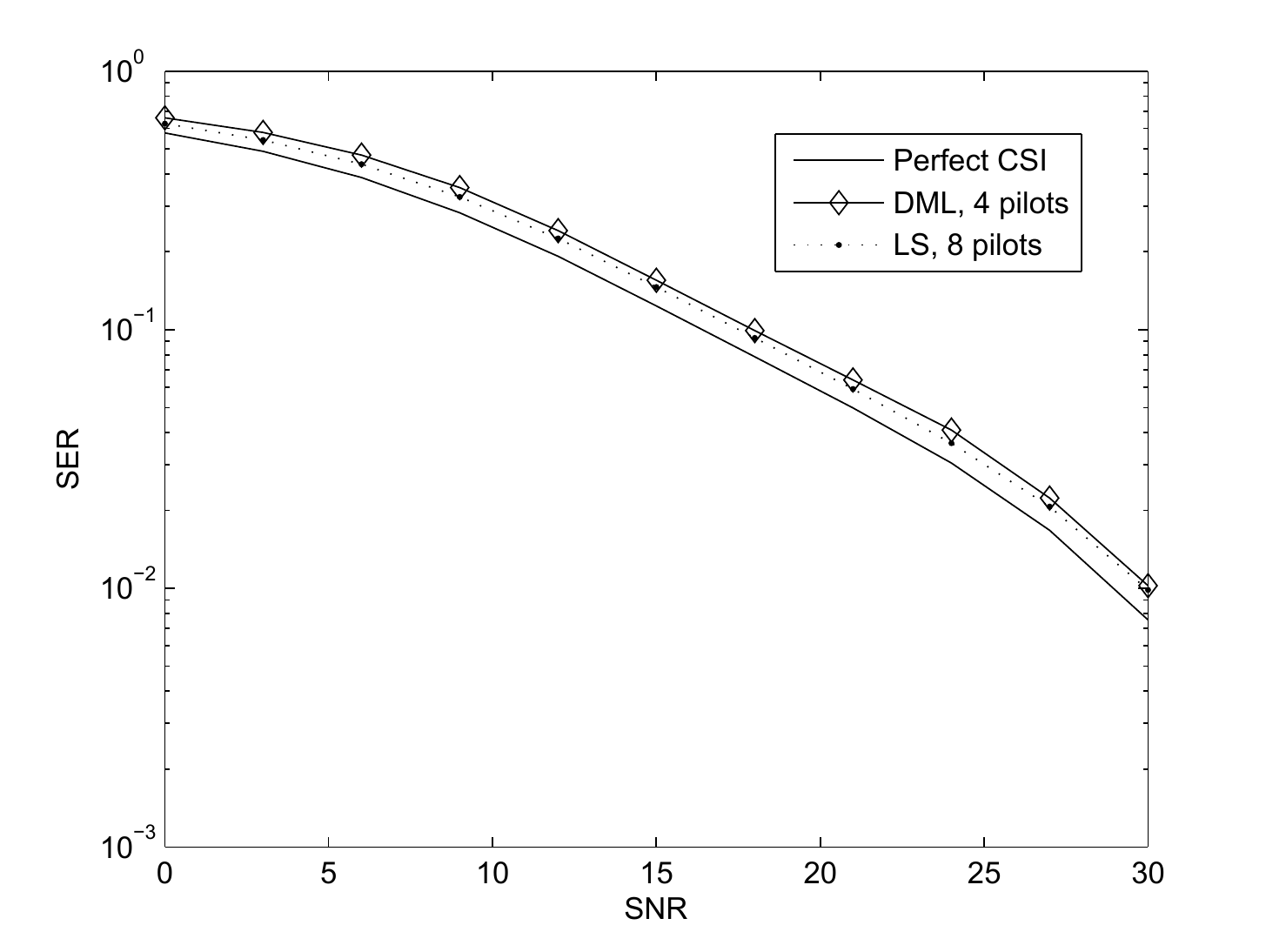}
\caption{Average SER versus SNR for the DML and LS estimators for $M=4$, assuming the channel is fixed for the duration of 40 samples. We use 4 pilots to resolve the $M$-fold ambiguity in $\phi_b$ for the DML estimator, and 8 pilots to estimate $a$ and $b$ for the LS estimator. }
\label{SER_QPSK2}
\end{figure}

\end{document}